\documentclass[lettersize,journal]{IEEEtran}

\IEEEoverridecommandlockouts

\usepackage[linesnumbered,ruled,noend]{algorithm2e}

\usepackage{subcaption} 
\usepackage{caption}
\usepackage{float}
\usepackage{enumitem}
\usepackage{balance}
\usepackage{cite}
\usepackage{amsmath,amssymb,amsfonts}
\usepackage{bbm}
\usepackage{bm}
\usepackage{url}
\usepackage{amsthm}
\usepackage{graphicx}
\usepackage{xcolor}
\usepackage{stfloats}
\usepackage{amsthm}
\usepackage{amsmath}
\DeclareMathOperator*{\argmin}{arg\,min}
\makeatletter%
\if@twocolumn%
\newcommand{\Figwidth}{\columnwidth}%
\else
\newcommand{\Figwidth}{4.5in}%
\fi%
\makeatother%

\allowdisplaybreaks

\renewcommand{\Figwidth}{3.5in}

\def\log{{\rm{log}}}
\def\min{{\rm{min}}}

\def\bv{\mathbf{v}}

\def\bh{\mathbf{h}}

\def\bs{\mathbf{s}}

\def\by{\mathbf{y}}

\def\bx{\mathbf{x}}

\def\bu{\mathbf{u}}
\def\bI{\mathbf{I}}
\def\be{\mathbf{e}}

\def\bA{\mathbf{A}}

\def\bz{\mathbf{z}}
\def\bA{\mathbf{A}}

\def\bzero{\mathbf{0}}

\def\bxi{\boldsymbol{\xi}}
\def\bepsilon{\boldsymbol{\epsilon}}

\def\btheta{\boldsymbol{\theta}}
\def\bzeta{\boldsymbol{\zeta}}

\def\bgamma{\boldsymbol{\gamma}}

\def\balpha{\boldsymbol{\alpha}}

\newtheorem{remark}{Remark}
\newtheorem{theorem}{Theorem}
\newtheorem{lemma}{Lemma}

\DeclareFontFamily{U}{mathx}{\hyphenchar\font45}
\DeclareFontShape{U}{mathx}{m}{n}{
      <5> <6> <7> <8> <9> <10>
      <10.95> <12> <14.4> <17.28> <20.74> <24.88>
      mathx10
      }{}
\DeclareSymbolFont{mathx}{U}{mathx}{m}{n}
\DeclareFontSubstitution{U}{mathx}{m}{n}
\DeclareMathAccent{\widecheck}{0}{mathx}{"71}
\DeclareMathAccent{\wideparen}{0}{mathx}{"75}

\usepackage{balance}

\begin{document}


\title{Learning OTA: A Unified Framework for \\ Edge Sensing, Computation, and Communication}

\author{Mehdi~Karbalayghareh,~\IEEEmembership{Member,~IEEE,} David~J.~Love,~\IEEEmembership{Fellow,~IEEE}, \\ and~Christopher~G.~Brinton,~\IEEEmembership{Senior Member,~IEEE}

\thanks{M. Karbalayghareh, D. J. Love, and C. G. Brinton are with the Department of Electrical and Computer Engineering, Purdue University, West Lafayette, IN, USA (e-mails: mkarbala@purdue.edu; djlove@purdue.edu; cgb@purdue.edu). This work was supported in part by the Office of Naval Research (ONR) under grant N00014-21-1-2472 and by the National Science Foundation (NSF) under grants CNS2212565, CNS2225578, and CPS2313109.} 

\thanks{A preliminary version of this work has been accepted to the 2026 IEEE Information Theory Workshop (ITW)~\cite{karbalayghareh-ITW2026-ImperfectSensing}, focusing only on sensing-parameter optimization. This journal version extends to a unified joint sensing--computation--communication design with OTA aggregation and convergence-aware resource optimization under link impairments.}}

\maketitle

\begin{abstract}
Most federated learning (FL) studies implicitly assume that training data is readily available at participating edge devices. In practice, this data must first be acquired through heterogeneous sensors with potentially different measurement modalities, and the quality and cost of this acquisition process can directly affect the learning performance. This dependence creates a coupled relationship among sensing, computation, and communication, necessitating a system model that captures their joint effect on resource efficiency and learning accuracy. This paper develops such a unified treatment for an over-the-air (OTA) FL system in which each device configures its sensing modality, power, resolution, and sample size, with aggregations conducted OTA at the server. We derive models for sensing noise, energy consumption, latency, and OTA aggregation distortion, and establish a non-convex convergence bound quantifying how these factors influence the global model trajectory.
Guided by this analysis, we formulate a per-round optimization that balances energy, latency, and learning quality under physical device constraints. The resulting mixed optimization is addressed through an alternating procedure with closed-form and one-dimensional updates for continuous variables and a lightweight discrete search. Numerical experiments corroborate the advantage of jointly optimizing sensing, computation, and communication decisions.
\end{abstract}

\begin{IEEEkeywords}
Federated learning, over-the-air (OTA) computation, joint sensing--computation--communication.
\end{IEEEkeywords}

\IEEEpeerreviewmaketitle
\section{Introduction}
\label{sec:introduction}
AI-native sixth generation (6G) wireless networks are expected to rely heavily on distributed edge learning, where heterogeneous devices collaboratively train models under latency, reliability, bandwidth, and energy constraints~\cite{Letaief-CommMag2019-6G,Saad-6G-2020,Tataria-6G-2021,Giordani-6G-ComMag2020,Andrews-6G,Brinton-6G-ComMag2025}. Federated learning (FL) provides a scalable framework for this setting by enabling devices to compute local updates and exchange model information rather than raw data~\cite{McMahan2017FedAvg,Li2020FLSurvey,Kairouz2021AdvancesFL,Lim_ComSurvey2020,Nguyen-CommSurvey2021-FL, Latif-CommSurvey2021-FL, Nguyen_JSAC2021,Mehdi_jsait26_coherence}. 

Prior work on wireless FL has largely focused on communication and computation bottlenecks. Communication-efficient FL schemes reduce downlink/uplink overhead through compression, quantization, sparsification, and partial device participation~\cite{Alistarh2017QSGD,Stich2018SparsifiedSGD,Sattler2019SparseFL,Mehdi_infocom26_coherence}. In wireless settings, over-the-air federated learning (OTA-FL) leverages the superposition property of multiple access channels (MAC) to aggregate updates in the uplink (at the server), significantly reducing latency and spectral usage~\cite{Amiri2020ADSGD,Yang2020AirCompFL,Sery2021OTAHetero,Zhu2021OneBitOTA}. Complementary studies incorporate receiver beamforming, power control, and convergence-driven resource allocation to mitigate aggregation distortion~\cite{Cao2022JSACPowerControl}. In parallel, computation-aware FL designs adapt local processing frequency and workload allocation to balance energy consumption and training latency~\cite{Wang2019AdaptiveFL,Nishio2019ClientSelection}. Collectively, these works establish a mature understanding of how communication and computation resources shape learning performance.

Despite this progress, most FL formulations rely on an implicit but critical assumption: training data at each device is already available and of fixed statistical quality. Under this abstraction, the learning pipeline begins with local model updates and ends with global aggregation, while the process by which data is acquired is entirely omitted. In practice, however, training data is generated through \emph{physical sensing} processes that consume energy, incur delay, and introduce measurement distortion. As a result, learning performance is fundamentally constrained not only by how efficiently model updates are communicated and computed, but also by how data is sensed and generated at the devices.

In many edge learning applications, devices are connected to heterogeneous sensors such as cameras, microphones, radar units, or LiDAR that acquire observations from the physical environment. The statistical quality of the sensed data depends on several factors, including sensing modality, sensing resolution, and sensing power. Different sensing modalities capture distinct physical phenomena and produce data with inherently different noise characteristics and feature distributions. Resolution governs the granularity of the acquired measurements and measurement precision, while sensing power determines the effective signal-to-noise ratio (SNR) of the acquired samples. These sensing decisions directly affect the fidelity of the training data and, in turn, influence learning performance and final model accuracy. Consequently, sensing, computation, and communication form \emph{a tightly coupled pipeline} in practical FL systems; isolating any single stage fails to capture the true system-level tradeoffs.

\subsection{Related Works}
Recent works have started to incorporate data acquisition and sensing into federated learning. Liu \emph{et al.}~\cite{Liu2023AmbientIntelligence} studied task-oriented sensing, computation, and communication integration for federated edge learning, where sensing time, energy consumption, and communication resources are jointly considered. Liang \emph{et al.}~\cite{Liang2025ISCC} introduced an integrated sensing--communication--computation (ISCC) formulation for OTA-FL, showing that the number of sensed samples, computation resources, and uplink transmission design should be jointly optimized when data are acquired during training. Du \emph{et al.}~\cite{Du2024ISCCOTAFL} further considered ISCC-enabled OTA-FL in wireless networks and analyzed the impact of sensing and aggregation errors on learning performance. Fu \emph{et al.}~\cite{Fu2025JSCC} studied a joint sensing--communication--computation (JSCC) design for aerial-network-enabled OTA-FL, where sensing, communication, and computation resources are optimized under aerial-network constraints. More recently, Wen \emph{et al.}~\cite{Wen2026ISCCAirFEEL} developed an ISCC-enabled OTA-FL framework that characterizes the effects of sensing noise and aggregation distortion.

Sensing-aware FL has also been studied in vertical or feature-partitioned learning architectures. Liu \emph{et al.}~\cite{Liu2022VFEELISAC} proposed a vertical federated edge learning (V-FEEL) framework with distributed integrated sensing and communication, where devices observe the same target from different sensing views. Cao \emph{et al.}~\cite{Cao2026VFEEL} further developed an ISCC-enabled V-FEEL framework for edge perception networks, where sensing noise, multi-view feature diversity, and OTA embedding aggregation jointly affect learning performance. These V-FEEL works are closely related in motivation, but they focus on feature-partitioned data and embedding aggregation, whereas this paper considers horizontal OTA-FL with sample-partitioned local datasets and model-update aggregation. 

Despite these advances, existing sensing-aware FL formulations still represent sensing quality through a limited set of controls, such as sample count or sensing power. These variables are important, but they do not fully capture the sensing-quality tradeoffs that arise in diverse edge systems. In particular, two devices collecting the same number of samples may produce data of different fidelity due to heterogeneous sensors. Conversely, increasing sensing resolution or sensing power may reduce measurement distortion without increasing the sample count, at the cost of higher energy and latency.

\subsection{Contributions}
This paper addresses the above gap by adopting a sensing-centric view of OTA-FL, where sensing quality is treated as a multidimensional controllable component rather than an implicit property of the local dataset. Specifically, each device jointly selects its sensing modality, sensing resolution, sensing power, and per-round sample size, while local computation and OTA communication resources are optimized within the same learning-aware framework. This distinction is important because sensing distortion corrupts the training data before local optimization and therefore propagates through subsequent gradient updates, unlike communication noise that affects only the aggregation stage.

To capture this effect, we employ an additive sensing noise model that relates sensing configuration to data fidelity in a tractable manner. We then derive a convergence characterization that captures the combined effects of sensing-induced distortion, stochastic gradient variability, and OTA aggregation error. Guided by this analysis, we formulate a per-round mixed discrete--continuous optimization problem that balances learning quality, sensing energy, computation latency, and communication distortion. The problem is solved using a low-complexity alternating method with closed-form and one-dimensional updates for continuous variables and finite search over discrete sensing configurations.

The main contributions of this work are summarized as follows:
\begin{itemize}
    \item We introduce a sensing-quality-aware OTA-FL model that captures the impact of sensing modality, resolution, power, and sample size on data fidelity and learning performance.
    \item We derive a convergence characterization that jointly accounts for sensing-induced distortion, stochastic gradient variability, and OTA aggregation error.
    \item We formulate and solve a unified sensing--computation--communication optimization problem using a low-complexity alternating approach.
\end{itemize}

\emph{Notation:} matrices and vectors are denoted by bold capital and bold small letters, respectively. For a matrix $\bA$, the transpose is denoted $\bA^\top$, the Hermitian $\bA^H$, and the conjugate $\bA^*$. $\bI_m$ is the $m \times m$ identity matrix, and the dimension is omitted if clear from the context. Expectation is denoted $\mathbb{E}(\cdot)$.

\section{System Model}
\label{sec:system-model}
In this section, we describe the system model for a general distributed learning paradigm over wireless channels, depicted in Fig.~\ref{fig:scenarion}. We present the models for \emph{global model broadcast, local data sensing, local computation, uplink channel estimation, and uplink model aggregation}.

\begin{figure}[t]
    \centering
    \includegraphics[page=2,width=\Figwidth]{Figures/scenario.pdf}
    \caption{System overview of the joint sensing, computation, and communication design for OTA-FL under heterogeneous sensing and wireless link impairments.}
    \label{fig:scenarion}
    \vspace{-1.3em}
\end{figure}

\subsection{Preliminaries}
We consider a federated learning system with a parameter server (PS) and $N$ participating devices (see Fig.~\ref{fig:scenarion}). The server is located at the base station (BS), which has $M$ antennas. All devices have a single antenna. The model training process is carried out over $T$ communication rounds. Throughout the paper, we use the terms ``device'', ``receiver'', and ``user'' interchangeably to refer to the receiving terminals in the downlink. The system operates under physical, network, data, and computation resource constraints and heterogeneity.

Each device $n \in [N]$ begins with an initial local dataset $\mathcal{D}_{n,0}$ of size $D_{n,0} = |\mathcal{D}_{n,0}| \ge 0$, which may be zero. In each communication round~$t \in [T]$, device~$n$ senses a fresh batch of $\overline{D}_{n,t} = |\overline{\mathcal{D}}_{n,t}|$ new samples through the local sensing process described in Section~\ref{subsec:sensing}. These newly acquired samples are appended to the device's accumulated dataset:
\begin{align}
\begin{cases}
\mathcal{D}_{n,t} = \mathcal{D}_{n,t-1} \cup \overline{\mathcal{D}}_{n,t} \\
D_{n,t} = D_{n,t-1} + \overline{D}_{n,t}
\end{cases}.
\label{eq:data-accumulation}
\end{align}

Define $D_t \triangleq \sum_{n=1}^{N} D_{n,t}$, and let $f$ denote the per-sample loss function. The local empirical loss at device~$n$ in round~$t$ for a $d$-dimensional model $\btheta \in \mathbb{R}^d$ is
\begin{align}
\widetilde{F}_{n,t}(\btheta) \triangleq \frac{1}{D_{n,t}} \sum_{\boldsymbol{u} \in \mathcal{D}_{n,t}} f(\btheta, \boldsymbol{u}),
\label{eq:local-loss}
\end{align}
and the corresponding global loss is
\begin{align}
\widetilde{F}_t(\btheta) = \sum_{n=1}^{N} \frac{D_{n,t}}{D_t} \, \widetilde{F}_{n,t}(\btheta).
\label{eq:global-loss}
\end{align}
Note that both $\widetilde{F}_{n,t}$ and $\widetilde{F}_t$ evolve deterministically with the accumulation history $\{\overline{\mathcal{D}}_{n,\ell}\}_{\ell \le t}$.

FL aims to minimize $\widetilde{F}_t(\btheta)$ through iterative collaboration between the PS and the $N$ edge devices. 
During each round, the BS broadcasts the global model $\btheta_t$ to the devices over the downlink channel, which is subject to link impairments. Each device~$n$ receives a noisy version of the model, denoted by $\btheta_{n,t}$, and performs $\tau$ steps of mini-batch stochastic gradient descent (SGD) using its accumulated local data. At local step $i \in [\tau]$, device~$n$ draws a mini-batch $\bxi_{n,t}^{(i)}$ of fixed size $B$ uniformly from $\mathcal{D}_{n,t}$ and updates its local model as
\begin{align}
\btheta_{n,t}^{(i+1)} = \btheta_{n,t}^{(i)} - \eta \, \nabla \widetilde{F}_{n,t}\!\Big(\btheta_{n,t}^{(i)}; \bxi_{n,t}^{(i)}\Big), \quad i \in [\tau],
\label{eq:local-updates}
\end{align}
where $\btheta_{n,t}^{(1)} = \btheta_{n,t}$ and $\eta > 0$ is the global stepsize. After $\tau$ local iterations, device~$n$ forms its model update
\begin{align}
\Delta\btheta_{n,t} \triangleq \btheta_{n,t}^{(\tau+1)} - \btheta_{n,t}^{(1)}.
\label{eq:local-update-def}
\end{align}
Each device $n$ simultaneously transmits a scaled version of its $\Delta\btheta_{n,t}$ to the BS over the uplink, as described in Sections~\ref{subsec:pilot} and~\ref{subsec:ota}. The PS employs the received superposition to update the global model, which in effect becomes
\begin{align}
\btheta_{t+1} = \btheta_t + \sum_{n=1}^{N} \frac{D_{n,t}}{D_t} \, \widehat{\Delta\btheta}_{n,t},
\label{eq:model-update}
\end{align}
where $\widehat{\Delta\btheta}_{n,t}$ is a noisy version of each device's update.
This process continues over $T$ training rounds.

\subsection{Global Model Broadcast}
\label{subsec:downlink}
In each communication round~$t \in [T]$, the BS sends the current global model $\btheta_t$ to all devices over the downlink. Let $\bh_{n,t} \in \mathbb{C}^M$ denote the channel between the BS and device~$n$ in round~$t$, with entries $\mathcal{CN}(0,1)$. Further, let $S_m$ and $W^{\rm dl}$ denote the number of bits\footnote{The value of $S_m$ depends on the transmission mode (analog or digital), coding rate, quantization level, and other practical factors that determine how it relates to the model dimension $d$. A detailed discussion of these aspects is outside the scope of this work; see~\cite{Amiri2020FLFading} for more information.} and the downlink bandwidth, respectively. 

The received signal at device~$n$ in round~$t$ is
\begin{align}
y_{n,t} = 
\bh^{H}_{n,t} \bx^{\rm dl}_t + w_{n,t},
\end{align}
where $\bx^{\rm dl}_t \in \mathbb{C}^M$ is the downlink transmit signal across $M$ antennas, satisfying the power constraint $\mathbb{E}[(\bx^{\rm dl}_t)^H \bx^{\rm dl}_t] = p^{\rm dl}$ at the BS, and $w_{n,t} \sim \mathcal{CN}(0,\sigma_w^2)$ is the AWGN at device~$n$. The corresponding per-device downlink SNR is
\begin{align}
\mathrm{SNR}^{\rm dl}_{n,t}
=
\frac{\|\bh_{n,t}\|^2 p^{\rm dl}}{\sigma_w^2}.
\end{align}
Since the BS uses a common broadcast transmission, the achievable downlink rate in round~$t$ is limited by the weakest link (device). Therefore,
\begin{align}
R^{\rm dl}_{t}
&=
W^{\rm dl}\,
\log_2\!\Big( 1 + \min_{n \in [N]} \mathrm{SNR}^{\rm dl}_{n,t} \Big).
\end{align}
The time required to broadcast the global model to all devices in round~$t$ is
\begin{align}
T^{\rm dl}_{t}
&=
\frac{S_m}{R^{\rm dl}_{t}}
=
\frac{S_m}
{W^{\rm dl}\,
\log_2\!\Big( 1 + \min_{n} \mathrm{SNR}^{\rm dl}_{n,t} \Big)}.
\label{eq:downlink-time}
\end{align}
The corresponding downlink energy at the BS is
\begin{align}
E^{\rm dl}_{t}= p^{\rm dl} \, T^{\rm dl}_{t}.
\end{align}


\begin{remark}
\label{remark:TDD}
The system operates under time-division duplexing (TDD) mode with perfect reciprocity for simplicity. Therefore, we do not consider downlink channel training, as the BS can estimate the channels via uplink pilots. Our framework is also applicable to frequency-division duplexing (FDD) systems, with additional overhead for downlink training and feedback.
\end{remark}

\subsection{Local Data Sensing}
\label{subsec:sensing}
In the downlink round~$t$, after receiving $\btheta_t$, each device~$n$ performs local sensing to collect new data for its local training. We consider a finite set of sensing modalities
\begin{align*}
\mathcal{M} = \{1, \cdots, M_{\mathrm{mod}}\},
\end{align*}
which may represent, e.g., camera, microphone, LiDAR, and RF sensing. Device~$n$ selects exactly one modality in round~$t$, from $\mathcal{M}_n \subseteq \mathcal{M}$, using binary selection variables $z^{(m)}_{n,t} \in \{0,1\}$ such that
\begin{align*}
\sum_{m \in \mathcal{M}_n} z^{(m)}_{n,t} = 1,
\quad \forall n,t.
\end{align*}

Each device~$n$ also chooses a sensing resolution $q_{n,t}$ (the number of measurements captured per sample, e.g., pixels per frame or points per scan) from a device-specific finite set
\begin{align*}
\mathcal{Q}_n = \{q^{(1)}_n,\ldots,q^{(Q_n)}_n\}.
\end{align*}

Recall that $\overline{D}_{n,t}$ denotes the size of newly sensed data in round~$t$ at device~$n$. Let $\bs_{n,t}$ denote the underlying physical signal. The sensed sample (feature vector) is modeled as
\begin{align}
\bx_{n,t} 
&= \bs_{n,t} + \bv_{n,t},
\label{eq:sensing-additiveNoise}
\end{align}
where $\bv_{n,t}$ is the additive sensing noise distributed~$\bv_{n,t} \sim \mathcal{N}\big(0, {\sigma}^{2}_{v,n,t} \mathbf{I}\big)$. The noise variance depends on the selected modality, resolution, and sensing power. Let $p^{ s}_{n,t} \in (0, P_{n}^{s,\max}]$ denote the allocated power for sensing at device~$n$ in round~$t$, where $P_{n}^{s,\max}$ is the maximum sensing power supported by the sensor front-end. We model the sensing noise variance as
\begin{align}
{\sigma}^{2}_{v,n,t}=
\sum_{m \in \mathcal{M}_n} 
z^{(m)}_{n,t}
\frac{c^{(m)}_{1,n}}{p^{ s}_{n,t} \,q_{n,t}},
\label{eq:sensing-noise}
\end{align}
where $c^{(m)}_{1,n} > 0$ is a modality- and device-specific constant characterizing the sensing hardware (e.g., sensor sensitivity and front-end noise figure). This captures the natural trade-off that higher sensing power and higher resolution yield lower sensing noise.

The time to sense $\overline{D}_{n,t}$ samples under modality~$m$ depends on a modality-specific sensing rate $r^{(m)}_{{ s},n} > 0$ (samples per second) and on the acquisition resolution $q_{n,t}$: finer resolution captures more measurements per sample, lowering the effective sampling rate to $r^{(m)}_{{ s},n}/q_{n,t}$. The sensing latency is
\begin{align}
T^{ s}_{n,t}
&= \sum_{m \in \mathcal{M}_n} 
z^{(m)}_{n,t}
\frac{\overline{D}_{n,t}\,q_{n,t}}{r^{(m)}_{{ s},n}},
\label{eq:sensing-time}
\end{align}
and the corresponding sensing energy is
\begin{align}
E^{ s}_{n,t}
&= p^{ s}_{n,t} T^{ s}_{n,t}.
\label{eq:sensing-energy}
\end{align}
After sensing, the newly acquired samples $\overline{\mathcal{D}}_{n,t}$ are added to the existing dataset $\mathcal{D}_{n,t-1}$ to form $\mathcal{D}_{n,t}$.

\begin{remark}
The additive Gaussian sensing noise model in~\eqref{eq:sensing-additiveNoise} and \eqref{eq:sensing-noise} is adopted as a tractable abstraction for capturing the dependence of sensing fidelity on modality, power, and resolution. While certain modalities exhibit structured or signal-dependent noise (e.g., Poisson statistics in optical sensing or multiplicative noise in radar), the Gaussian model is the entropy-maximizing choice among additive noises with the same variance~\cite{Diggavi2001} and is adopted here as a tractable information-theoretic worst case. Extending the analysis to signal-dependent noise models is left to future work.
\end{remark}

\subsection{Local Computation Model}
Following the local SGD updates in~\eqref{eq:local-updates} and the model update definition in~\eqref{eq:local-update-def}, we now model the latency and energy of the local computation phase at each device. Let $f^{ c}_{n,t}$ denote the CPU frequency (cycles per second) at device~$n$ in round~$t$, and let $\kappa$ denote the number of CPU cycles required to compute one sample gradient (a system constant determined by the model architecture and SGD configuration). Since device~$n$ performs $\tau$ local SGD steps in round~$t$, each on a mini-batch of size $B$, the total number of sample-gradient evaluations per round is $\tau B$. The local computation time is therefore
\begin{align}
T^{c}_{n,t} = \frac{\kappa \tau B}{f^{c}_{n,t}}.
\label{eq:comp-time}
\end{align}
We adopt the standard dynamic voltage and frequency scaling (DVFS) model for processor energy~\cite{Tran2019FL,Yang2020EE}, under which the energy consumed for a workload of $\kappa \tau B$ cycles at frequency $f^{ c}_{n,t}$ is
\begin{align}
E^{ c}_{n,t} = \kappa_c \, \kappa \, \tau \, B \, \big(f^{ c}_{n,t}\big)^2,
\label{eq:comp-energy}
\end{align}
where $\kappa_c > 0$ is the effective switched capacitance coefficient of the processor. The CPU frequency at each device is constrained as
\begin{align}
f^{\min}_n \le f^{ c}_{n,t} \le f^{\max}_n, \quad \forall n,t.
\label{eq:cpu-freq-constraint}
\end{align}
The computation cost depends on $B$, $\tau$, and $f^{c}_{n,t}$, but not on the accumulated dataset size $D_{n,t}$. The newly sensed sample count $\overline{D}_{n,t}$ remains a sensing-side decision variable that affects the sensing latency and energy in~\eqref{eq:sensing-time}--\eqref{eq:sensing-energy} and the accumulated size $D_{n,t}$ entering the convergence analysis in Section~\ref{sec:convergence}.

\subsection{Uplink Channel Training}
\label{subsec:pilot}
Before transmitting the local model updates, the BS acquires channel state information (CSI) via uplink pilot transmission. 
At the beginning of round~$t$, each device transmits an uplink pilot sequence of length $\pi_p$ symbols. We assume orthogonal pilots (no pilot contamination). Let $p^{\rm p}_{n,t}$ denote the pilot transmit power of device~$n$ in round~$t$. The received pilot signal corresponding to device~$n$ at the BS is
\begin{align}
\by^{\rm p}_{n,t} =
\sqrt{\pi_p p^{\rm p}_{n,t}}\, \bh_{n,t} + \bz_t,
\label{eq:pilot-observation}
\end{align}
where $\bz_t \sim \mathcal{CN}(0, \sigma_z^2 \mathbf{I}_M)$ is the AWGN in the received signal at the BS. Observing~\eqref{eq:pilot-observation}, the MMSE estimate of $\bh_{n,t}$ is \cite{Hassibi2003HowMuch}
\begin{align}
\widehat{\bh}_{n,t} = \frac{\sqrt{\pi_p p^{\mathrm{p}}_{n,t}}}{\pi_p p^{\mathrm{p}}_{n,t} + \sigma_z^2} \, \by^{\mathrm{p}}_{n,t}.
\label{eq:mmse-estimate}
\end{align}
The corresponding estimation error is
\begin{align}
\be_{n,t}
&\triangleq 
\bh_{n,t} - \widehat{\bh}_{n,t},
\label{eq:MMSE-error}
\end{align}
which is Gaussian and uncorrelated with the channel estimate $\widehat{\bh}_{n,t}$~\cite{Hassibi2003HowMuch}. Therefore,
\begin{align}
\mathbb{E}\big[\widehat{\bh}_{n,t}\widehat{\bh}_{n,t}^H\big] &= (1- \sigma_{e,n,t}^2)\mathbf{I}_M, \\
\mathbb{E}\big[\be_{n,t}\be_{n,t}^H\big] &= \sigma_{e,n,t}^2 \mathbf{I}_M,
\end{align}
where the estimation error variance is
\begin{align}
\sigma_{e,n,t}^2= 1 - \frac{\pi_p p^{\mathrm{p}}_{n,t} } {\pi_p p^{\mathrm{p}}_{n,t} + \sigma_z^2} = \frac{\sigma_z^2}{\pi_p p^{\mathrm{p}}_{n,t} + \sigma_z^2}.
\label{eq:est-error-variance}
\end{align}


\subsection{Uplink OTA Aggregation of Model Updates}
\label{subsec:ota}
Each device~$n$ transmits its model update $\Delta\btheta_{n,t}$ to the BS in the uplink. We focus on a per-coordinate description of the transmission, which is standard in OTA-FL: each element of $\Delta\btheta_{n,t}$ is transmitted in a symbol interval, and the OTA aggregation is performed coordinate-wise. For brevity, we consider one generic coordinate, denoted by $\Delta\theta_{n,t}$, and omit the coordinate index. We assume that the update coordinates are power-normalized, with the normalization/de-normalization factors absorbed into the transmit/receive scaling and omitted from the notation.

Let $p^{u}_{n,t} \in [0, P_{n}^{u,\max}]$ denote the transmit power used by device~$n$ in round~$t$, where $P_{n}^{u,\max}$ is the maximum transmit power of the device's power amplifier. The received signal at the BS across its $M$ antennas is
\begin{align}
\by_t &= \sum_{n=1}^{N} \bh_{n,t} \sqrt{p^{u}_{n,t}}\, \Delta\theta_{n,t} + \bz_t,
\label{eq:uplink-signal}
\end{align}
where $\bz_t \sim \mathcal{CN}(0, \sigma_z^2 \mathbf{I}_M)$. Using \eqref{eq:MMSE-error}, \eqref{eq:uplink-signal} becomes
\begin{align}
\by_t = \sum_{n=1}^{N} \widehat{\bh}_{n,t} \sqrt{p^{u}_{n,t}}\, \Delta\theta_{n,t} + \sum_{n=1}^{N} \be_{n,t} \sqrt{p^{u}_{n,t}}\, \Delta\theta_{n,t} + \bz_t.
\end{align}
The BS applies a linear combining vector $\bv_t \in \mathbb{C}^{M \times 1}$ to obtain a scalar aggregated estimate
\begin{align}
\tilde{y}_t = \bv_t^H \by_t.
\end{align}

The desired aggregated update for the considered coordinate is the weighted sum
\begin{align}
g_t \triangleq 
\sum_{n=1}^{N} \alpha_{n,t} \Delta\theta_{n,t},
\label{eq:g_t}
\end{align}
where $\alpha_{n,t} \triangleq D_{n,t}/D_t$. Due to channel distortion, estimation errors, and noise, $\tilde{y}_t$ will deviate from $g_t$. We define the mean-square OTA aggregation error (per coordinate) as
\begin{align}
\mathrm{MSE}^{u}_t \triangleq 
\mathbb{E}\big[\,|\tilde{y}_t - g_t|^2 \,\big|\, \{\widehat{\bh}_{n,t}\}_{n=1}^{N} \big],
\end{align}
where the expectation is taken over the channel estimation error, receiver noise, and normalized update coordinates. Specifically, we assume that the normalized update coordinates have unit second moment and are second-order orthogonal across devices:
\begin{align}
\mathbb{E}[|\Delta\theta_{n,t}|^2]=1,\quad \mathbb{E}[\Delta\theta_{n,t}\Delta\theta_{j,t}^{*}]=0, \,\, n\neq j,\,\, \forall n,t.
\label{eq:variance-of-update-elements}
\end{align}
The normalized update coordinates are also independent of the channel estimation error and receiver noise. Under these assumptions and the MMSE channel estimation model in~\eqref{eq:mmse-estimate}-\eqref{eq:est-error-variance}, we have 
\begin{align}
\mathrm{MSE}^{u}_t = &
\sum_{n=1}^{N}
\big| \bv_t^H \widehat{\bh}_{n,t} \sqrt{p^{u}_{n,t}} - \alpha_{n,t} \big|^2 \nonumber \\
&+ \|\bv_t\|^2 \Big( \sum_{n=1}^{N} p^{u}_{n,t}\sigma_{e,n,t}^2 + \sigma_z^2 \Big).
\label{eq:mse-ota-simplified}
\end{align}
In the sequel, we will use \eqref{eq:mse-ota-simplified} as the OTA aggregation distortion metric that enters the convergence bound and optimization problem.

In analog OTA aggregation, each of the $d$ model coordinates occupies one uplink symbol interval, so the uplink transmission time is common across devices and given by
\begin{align}
T^{u}_{t} &= \frac{d}{W^{u}}, \\
E^{u}_{n,t} &= p^{u}_{n,t} T^{u}_{t},
\end{align}
where $W^{u}$ is the uplink symbol rate.

\subsection{Per-Round Latency and Energy}
In round $t$, the total latency includes the downlink broadcast time, the maximum sensing and computation time across devices, and the common uplink aggregation time. Since the pilot length and pilot powers are fixed and are not optimization variables, the corresponding channel estimation latency and energy are constant with respect to the design variables and are omitted from the per-round accounting. Hence, the per-round latency is
\begin{align}
T_t = T_t^{\rm dl} + \max_{n}\big(T^{s}_{n,t}+T^{c}_{n,t}\big)+T^{u}_{t}.
\label{eq:round-latency}
\end{align}
The total energy consumption in round~$t$ is the sum of the devices' sensing, computation, and uplink communication energies. Since the downlink energy is typically dominated by the BS and does not impact device-side optimization, we focus on the device energies and define
\begin{align}
E_t = \sum_{n=1}^{N} \Big(E^{s}_{n,t}+E^{c}_{n,t}+E^{u}_{n,t}\Big).
\label{eq:round-energy}
\end{align}
We will later leverage the expressions in~\eqref{eq:round-latency} and \eqref{eq:round-energy}, together with a convergence bound on the FL process, to formulate and solve a joint sensing--computation--communication optimization problem.


\section{Convergence Analysis}
\label{sec:convergence}
In this section, we study the learning performance of the proposed framework when the global loss is non-convex. For each sample vector $\bu \in \mathcal{D}_{n,t}$, let $\bs(\bu)$ denote its underlying clean signal in the sensing model~\eqref{eq:sensing-additiveNoise}. We define the clean reference local and global losses as
\begin{align} F_{n,t}(\btheta) &\triangleq \frac{1}{D_{n,t}} \sum_{\bu \in \mathcal{D}_{n,t}} f(\btheta,\bs(\bu)),\\ F_t(\btheta) &\triangleq \sum_{n=1}^{N}\frac{D_{n,t}}{D_t}F_{n,t}(\btheta). \label{eq:clean-global-loss} 
\end{align}
These losses are used only for analysis, while the actual local updates use $\widetilde{F}_{n,t}$ in~\eqref{eq:local-loss}. Since $F_t$ evolves with the accumulated data, we track $\frac{1}{T}\sum_{t=0}^{T-1}\mathbb{E}\!\left[\|\nabla F_t(\btheta_t)\|^2\right]$.

\subsection{Preliminaries and Assumptions}
\label{subsec:assumptions}
We work under standard OTA-FL assumptions~\cite{Cao2022JSACPowerControl,Sery2021OTAHetero,Amiri_TWC2021_FLConvergence}, together with sensing-specific conditions which apply uniformly across rounds.

\begin{enumerate}[leftmargin=*,label=\textbf{A\arabic*.}]

\item \textbf{$L$-smoothness:} The per-sample loss $f(\btheta, \bu)$ is $L$-smooth in $\btheta$ for every $\bu$,
\begin{align}
\|\nabla f(\btheta, \bu) - \nabla f(\btheta', \bu)\| \le L \|\btheta - \btheta'\|, \quad \forall \btheta, \btheta', \bu.
\end{align}
Hence $F_{n,t}$ and $F_t$ are $L$-smooth in $\btheta$ for every $n$ and $t$.

\item \textbf{Bounded loss:} There exist $F_{\inf}, F_{\sup} \in \mathbb{R}$ such that $F_{\inf} \le f(\btheta, \bs) \le F_{\sup}$ for every clean sample $\bs$ and all $\btheta$; consequently
\begin{align}
F_{\inf} \le F_t(\btheta) \le F_{\sup}, \quad \forall \btheta, \, t.
\end{align}
Let $\Delta F \triangleq F_{\sup} - F_{\inf}$.

\item \textbf{Bounded gradient norm:} There exists $G > 0$ such that
\begin{align}
\|\nabla F_{n,t}(\btheta;\bxi)\|\le G,
\quad \forall \btheta,n,t,\bxi,
\end{align}
where $\bxi$ is any admissible clean mini-batch.


\item \textbf{Mini-batch variance decomposition:} Let $F_n^{\infty}(\btheta) \triangleq \mathbb{E}_{\bs \sim \mathcal{P}_n}[f(\btheta, \bs)]$ denote the clean population loss at device $n$. For a mini-batch $\bxi^{(i)}_{n,t}$, $i \in [\tau]$, with size $B$ drawn uniformly from $\mathcal{D}_{n,t}$, the stochastic gradient error around the population gradient satisfies
\begin{align}
\mathbb{E}\!\left[\|\nabla F_{n,t}(\btheta; \bxi^{(i)}_{n,t}) - \nabla F_n^{\infty}(\btheta)\|^2\right] \le \frac{\sigma_B^2}{B} + \frac{\sigma_{\rm pop}^2}{D_{n,t}},
\end{align}
where $\sigma_B^2$ bounds the per-sample variance around the empirical mean and $\sigma_{\rm pop}^2$ bounds the per-sample variance around the population mean.

\item \textbf{Lipschitz-in-input gradient:} For each device $n$, there exists $c_n^s > 0$ such that, for any clean signal $\bs$ and any sensing-noise realization $\bv$,
\begin{align}
\mathbb{E}_{\bv}\!\left[\|\nabla f(\btheta; \bs + \bv) - \nabla f(\btheta; \bs)\|^2\right] \le c_n^s\, \mathbb{E}\|\bv\|^2.
\end{align}
This holds when $f(\btheta,\cdot)$ has a uniformly bounded cross-Jacobian $\nabla_{\bs}\nabla_{\btheta} f$, consistent with Jacobian-based sensitivity analyses under input perturbations~\cite{rifai2011adding}.

\item \textbf{First-order unbiased sensing perturbation:} Under the additive sensing model in~\eqref{eq:sensing-additiveNoise}, the zero-mean sensing noise induces a sample-gradient perturbation that is unbiased to first order,
\begin{align}
\mathbb{E}_{\bv}\!\left[\nabla f(\btheta; \bs + \bv) - \nabla f(\btheta; \bs)\right] = \bzero,
\end{align}
consistent with first-order analyses of learning under additive input noise~\cite{Bishop-NoisyTraining,rifai2011adding}.
\end{enumerate}

\begin{remark}
\label{remark:perfect-dl-for-conv}
Throughout the convergence analysis, we assume $\btheta_{n,t} = \btheta_t$ for every $n$ and $t$, i.e., the BS broadcast is decoded reliably at every device. This isolates the effects of sensing, local computation, and uplink OTA aggregation on the learning trajectory. Extending the analysis to the imperfect downlink case in Section~\ref{subsec:downlink} follows the same template: an additional perturbation term $\btheta_{n,t} - \btheta_t$ propagates through the $\tau$ local SGD steps and contributes a separate term to the per-round bound. We omit this for brevity.
\end{remark}

Let $\bzeta_t$ denote the aggregate perturbation, which in the per-round update arises from three physical sources: sensing noise in the local data acquisition phase, stochasticity and finite-data effects in the local computation phase, and uplink OTA aggregation distortion due to fading, imperfect CSI, and receiver noise at the BS. In the following, we derive the decomposition explicitly and separate each contribution into its bias and variance components.

\subsection{Decomposition of the Per-Round Update Error}
\label{subsec:perturbation-decomposition}
We begin by writing the BS-side update in a form that exposes the deviation from an ideal full-gradient descent step on $F_t$. Under Remark~\ref{remark:perfect-dl-for-conv}, each device initializes its local SGD at $\btheta^{(1)}_{n,t} = \btheta_t$. From~\eqref{eq:local-update-def} and~\eqref{eq:model-update}, we have
\begin{align}
\btheta_{t+1} - \btheta_t = \sum_{n=1}^{N} \frac{D_{n,t}}{D_t} \widehat{\Delta\btheta}_{n,t}.
\label{eq:bs-update-rearranged}
\end{align}
We add and subtract the ideal $\tau$-step full-gradient reference on $F_t$, namely $-\eta\tau\nabla F_t(\btheta_t)$, to obtain
\begin{align}
\btheta_{t+1} - \btheta_t = -\eta\tau\nabla F_t(\btheta_t) + \bzeta_t,
\label{eq:bs-update-perturbed}
\end{align}
where the aggregate perturbation $\bzeta_t$ is defined as
\begin{align}
\bzeta_t \triangleq \sum_{n=1}^{N} \frac{D_{n,t}}{D_t} \widehat{\Delta\btheta}_{n,t} + \eta\tau\nabla F_t(\btheta_t).
\label{eq:zeta-def}
\end{align}
The expression in~\eqref{eq:zeta-def} captures every deviation between the actual BS update and the ideal full-gradient step. We decompose $\bzeta_t$ along the three physical sources previewed in Section~\ref{subsec:assumptions} by inserting the noise-free local update as an intermediate quantity.

Let $\Delta\btheta_{n,t}^{\rm clean}$ denote the hypothetical local update that device~$n$ would have produced in the absence of sensing noise, using the same initialization and the same sequence of mini-batch indices as the actual local SGD, but replacing each noisy observation $\bx_{n,t}$ by its corresponding clean signal $\bs_{n,t}$. Adding and subtracting $\sum_{n} \frac{D_{n,t}}{D_t}\Delta\btheta_{n,t}$ and $\sum_{n} \frac{D_{n,t}}{D_t}\Delta\btheta_{n,t}^{\rm clean}$ in~\eqref{eq:zeta-def} results in~\eqref{eq:zeta-decomposition}, shown at the top of the next page.

\begin{table*}[!t]
\begin{align}
\bzeta_t = \underbrace{\sum_{n=1}^{N}\frac{D_{n,t}}{D_t}
\big(\widehat{\Delta\btheta}_{n,t} - \Delta\btheta_{n,t}\big)}
_{\triangleq \, \bepsilon_t} + \underbrace{\sum_{n=1}^{N}\frac{D_{n,t}}{D_t}
\big(\Delta\btheta_{n,t} - \Delta\btheta_{n,t}^{\rm clean}\big)}
_{\triangleq \, \bgamma^s_t} + \underbrace{\sum_{n=1}^{N}\frac{D_{n,t}}{D_t}
\Delta\btheta_{n,t}^{\rm clean}
+ \eta\tau\nabla F_t(\btheta_t)}
_{\triangleq \, \bgamma^c_t}
\label{eq:zeta-decomposition}
\end{align}
\hrulefill
\vspace{-1em}
\end{table*}

The three terms respectively capture the uplink OTA aggregation distortion ($\bepsilon_t$), the propagation of sensing noise through the local SGD updates ($\bgamma^s_t$), and the deviation of the noise-free $\tau$-step local SGD from the ideal full-gradient reference ($\bgamma^c_t$). Each contribution is expressed directly in terms of system quantities.

We separate each contribution into a deterministic given $\btheta_t$ bias and a zero-mean variance component. Define
\begin{align}
\bar{\bzeta}_t \triangleq \mathbb{E}\!\left[\bzeta_t \,\big|\, \btheta_t\right], \qquad \bzeta_t^{0} \triangleq \bzeta_t - \bar{\bzeta}_t,
\end{align}
so that $\mathbb{E}[\bzeta_t^{0} \,|\, \btheta_t] = \bzero$ and
\begin{align}
\mathbb{E}\!\left[\|\bzeta_t\|^2 \,\big|\, \btheta_t\right] = \|\bar{\bzeta}_t\|^2 + \mathbb{E}\!\left[\|\bzeta_t^{0}\|^2 \,\big|\, \btheta_t\right].
\label{eq:bias-variance-split}
\end{align}
Applying this split inside~\eqref{eq:zeta-decomposition}, the three contributions split as follows. By Assumption A6, the per-sample sensing perturbation is zero-mean in a first-order sense, so $\bgamma^s_t$ contributes only to the variance. The computation contribution $\bgamma^c_t$ has a nonzero conditional mean from client drift over the $\tau$ local steps on heterogeneous $F_{n,t}$, so it contributes to both the bias and the variance. The communication contribution $\bepsilon_t$ has a conditional mean whose power is the channel misalignment term in~\eqref{eq:mse-ota-simplified} and a zero-mean residual from the MMSE channel estimation error and the receiver noise.


\subsection{Bounding the Bias and Variance}
\label{subsec:bounding-perturbation}
We now bound each contribution in~\eqref{eq:zeta-decomposition} using Assumptions A1--A6, separating bias and variance where appropriate.

\subsubsection{Sensing contribution}
For each accumulated sample in $\mathcal{D}_{n,t}$, the noise variance is the one that prevailed in the round $\ell\le t$ when the sample was acquired:
\begin{align}
\sigma^2_{v,n,\ell} = \frac{c^{(m_{n,\ell})}_{1,n}}{p^s_{n,\ell}\, q_{n,\ell}}.
\end{align}
Let $\bgamma^s_{n,t}\triangleq \Delta\btheta_{n,t}-\Delta\btheta_{n,t}^{\rm clean}$. Since the actual and clean local updates use the same mini-batch indices, comparing their SGD recursions and applying A1 and A5 with $\eta\tau L\le 1/2$ gives
\begin{align}
\mathbb{E}\big\| \bgamma^s_{n,t} \big\|^2 \le 4\eta^2\tau^2\frac{c^s_n}{D_{n,t}} \sum_{\ell=1}^{t} \overline{D}_{n,\ell}\, \sigma^2_{v,n,\ell}.
\label{eq:sensing-perdev}
\end{align}
The factor $4$ follows by unrolling the noisy--clean SGD recursion and using $\sum_{i=0}^{\tau-1}(1+\eta L)^i \le \tau e^{\eta L\tau}<2\tau$ under $\eta\tau L\le 1/2$. Aggregating across devices with weights $\alpha_{n,t}=D_{n,t}/D_t$ and applying Jensen's inequality,
\begin{align}
\mathbb{E}\| \bgamma^s_t \|^2 \le 4\eta^2\tau^2 \sum_{n=1}^{N} \frac{D_{n,t}}{D_t}\, c^s_n\, \cdot \frac{1}{D_{n,t}} \sum_{\ell=1}^{t} \overline{D}_{n,\ell}\, \sigma^2_{v,n,\ell} \triangleq \mathcal{E}^s_t.
\label{eq:Ets-def}
\end{align}
The sensing floor depends on the entire history of sensing parameters used to acquire samples in $\mathcal{D}_{n,t}$, not just on the round-$t$ choices.

\subsubsection{Computation contribution}
The computation term $\bgamma^c_t$ measures the deviation of the noise-free $\tau$-step local SGD from the ideal full-gradient reference $-\eta\tau\nabla F_t(\btheta_t)$. We split it into a bias from client drift over heterogeneous $F_{n,t}$ and a zero-mean variance from mini-batch sampling.

\begin{lemma}[Local-step drift bound]
\label{lem:drift}
Under Assumptions A1 and A3 with the stepsize $\eta\tau L \le 1/2$, the conditional mean of $\bgamma^c_t$ satisfies
\begin{align}
\big\| \mathbb{E}[\bgamma^c_t \,|\, \btheta_t] \big\|^2 \le L^2\eta^4\tau^4 G^2.
\label{eq:drift-bound}
\end{align}
\end{lemma}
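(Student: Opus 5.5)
We write the clean local trajectory explicitly. With $\btheta^{(1),\rm clean}_{n,t}=\btheta_t$ and clean mini-batches $\bxi^{(i)}_{n,t}$,
\begin{align*}
\Delta\btheta^{\rm clean}_{n,t} = -\eta\sum_{i=1}^{\tau}\nabla F_{n,t}\big(\btheta^{(i),\rm clean}_{n,t};\bxi^{(i)}_{n,t}\big).
\end{align*}
Since $\sum_n \frac{D_{n,t}}{D_t}\nabla F_{n,t}(\btheta_t)=\nabla F_t(\btheta_t)$, we can rewrite
\begin{align*}
\bgamma^c_t = -\eta\sum_{n=1}^{N}\frac{D_{n,t}}{D_t}\sum_{i=1}^{\tau}\Big(\nabla F_{n,t}\big(\btheta^{(i),\rm clean}_{n,t};\bxi^{(i)}_{n,t}\big)-\nabla F_{n,t}(\btheta_t)\Big).
\end{align*}
Conditioning on $\btheta_t$ and using that $\bxi^{(i)}_{n,t}$ is drawn uniformly from $\mathcal{D}_{n,t}$, independently of the iterate $\btheta^{(i),\rm clean}_{n,t}$ (which depends only on earlier batches), the tower property gives $\mathbb{E}[\nabla F_{n,t}(\btheta^{(i)};\bxi^{(i)})\,|\,\btheta^{(i)}]=\nabla F_{n,t}(\btheta^{(i)})$. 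Hence
\begin{align*}
\mathbb{E}[\bgamma^c_t\,|\,\btheta_t] = -\eta\sum_{n}\frac{D_{n,t}}{D_t}\sum_{i=1}^{\tau}\mathbb{E}\big[\nabla F_{n,t}(\btheta^{(i),\rm clean}_{n,t})-\nabla F_{n,t}(\btheta_t)\,\big|\,\btheta_t\big].
\end{align*}

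The plan is then to bound the norm of this expression. By Jensen (convexity of $\|\cdot\|$) and A1, each summand has norm at most $L\,\mathbb{E}\|\btheta^{(i),\rm clean}_{n,t}-\btheta_t\|$. By A3 every stochastic gradient has norm at most $G$, so the deterministic bound $\|\btheta^{(i),\rm clean}_{n,t}-\btheta_t\|\le \eta(i-1)G$ holds. Summing over $i$ gives $\sum_{i=1}^{\tau}(i-1)=\tau(\tau-1)/2\le\tau^2$. Averaging over $n$ with weights summing to one, again by the triangle inequality, yields
\begin{align*}
\big\|\mathbb{E}[\bgamma^c_t\,|\,\btheta_t]\big\| \le \eta\cdot L\,\eta G\,\tfrac{\tau(\tau-1)}{2} \le L\eta^2\tau^2 G .
\end{align*}
Squaring gives \eqref{eq:drift-bound}, with a spare factor of $1/4$. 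The stepsize condition $\eta\tau L\le 1/2$ is not actually needed on this route; it is only compatible with it. If one preferred a recursive argument on $\mathbb{E}\|\btheta^{(i)}-\btheta_t\|$ via smoothness, that condition would be used there to keep the growth factor below $2$.

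The main subtlety is the conditional unbiasedness step: we must verify that the mini-batch at step $i$ is independent of the current iterate given the past, so that conditioning on the iterate is legitimate. A second point is that A3 is stated for the clean mini-batch gradient, which is exactly the object appearing in the clean trajectory. This is why $\bgamma^c_t$ is defined through $\Delta\btheta^{\rm clean}_{n,t}$ rather than the noisy update.
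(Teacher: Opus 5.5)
Your proof is correct and matches the paper's argument step for step. The shared steps are: unbiasedness of the clean mini-batch gradient given the current iterate; adding and subtracting $\sum_n \alpha_{n,t}\nabla F_{n,t}(\btheta_t)=\nabla F_t(\btheta_t)$; the triangle/Jensen inequality combined with $L$-smoothness; and the deterministic drift bound $\|\btheta^{(i)}_{n,t}-\btheta_t\|\le (i-1)\eta G$ from A3. The only difference is that the paper bounds each $(i-1)\eta G$ by $\eta\tau G$, while your exact sum $\tau(\tau-1)/2$ gives the spare factor $1/4$. You are also right that $\eta\tau L\le 1/2$ is never used, and the paper's proof does not invoke it either.
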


\begin{proof}
See Appendix~\ref{lemma:drift-proof}.
\end{proof}

\begin{lemma}[Mini-batch variance propagation]
\label{lem:variance}
Under Assumptions A1, A3, and A4 with the stepsize $\eta\tau L \le 1/2$, the variance of $\bgamma^c_t$ around its conditional mean satisfies
\begin{align}
\mathbb{E}\big\| \bgamma^c_t - \mathbb{E}[\bgamma^c_t \,|\, \btheta_t] \big\|^2 \le 3\eta^2 \tau  \sum_{n=1}^{N} \frac{D_{n,t}}{D_t} \left( \frac{\sigma^2_B}{B} + \frac{\sigma^2_{\rm pop}}{D_{n,t}} \right).
\label{eq:variance-bound}
\end{align}
\end{lemma}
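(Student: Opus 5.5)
Since $\bgamma^c_t = \sum_n \alpha_{n,t}\Delta\btheta^{\rm clean}_{n,t} + \eta\tau\nabla F_t(\btheta_t)$ with $\alpha_{n,t}=D_{n,t}/D_t$, the plan is to write the clean local update as a sum of $\tau$ stochastic gradients and separate the mini-batch noise from the deterministic drift. Denote the clean local iterates by $\btheta^{(i)}_{n,t}$ (same initialization $\btheta_t$ and same mini-batch indices as the noisy run). Then
\begin{align*}
\Delta\btheta^{\rm clean}_{n,t} = -\eta\sum_{i=1}^{\tau}\nabla F_{n,t}\big(\btheta^{(i)}_{n,t};\bxi^{(i)}_{n,t}\big).
\end{align*}
For each $i$, write the stochastic gradient as $\nabla F_{n,t}(\btheta^{(i)}_{n,t}) + \bu^{(i)}_{n}$, where $\bu^{(i)}_n$ has zero conditional mean given the filtration $\mathcal{F}^{(i)}$ generated by $\btheta_t$ and the previous mini-batches.

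This gives $\bgamma^c_t = \mathbf{a}_t + \mathbf{b}_t$, with a martingale-noise part $\mathbf{a}_t = -\eta\sum_n\alpha_{n,t}\sum_i \bu^{(i)}_n$ and a drift part $\mathbf{b}_t = -\eta\sum_n\alpha_{n,t}\sum_i\big(\nabla F_{n,t}(\btheta^{(i)}_{n,t}) - \nabla F_{n,t}(\btheta_t)\big)$. Since the variance around the conditional mean is at most the second moment around any $\btheta_t$-measurable vector, we bound $\mathbb{E}\|\mathbf{a}_t + \mathbf{b}_t - \mathbb{E}[\mathbf{b}_t|\btheta_t]\|^2$. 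We then split it with $(x+y)^2\le \tfrac32 x^2 + 3y^2$, or a similar weighting, chosen so the constants land at $3\eta^2\tau$.

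\textbf{Noise term.} Across $i$ the increments are martingale differences, so the cross terms vanish. Across devices the sampling is independent given $\btheta_t$, so those cross terms vanish as well. Hence
\begin{align*}
\mathbb{E}\|\mathbf{a}_t\|^2 = \eta^2\sum_n\alpha_{n,t}^2\sum_i\mathbb{E}\|\bu^{(i)}_n\|^2 \le \eta^2\tau\sum_n\alpha_{n,t}\Big(\frac{\sigma_B^2}{B}+\frac{\sigma^2_{\rm pop}}{D_{n,t}}\Big).
\end{align*}
Here we use $\alpha_{n,t}^2\le\alpha_{n,t}$ and A4: since the conditional mean minimizes squared deviation, $\mathbb{E}\|\bu^{(i)}_n\|^2 \le \mathbb{E}\|\nabla F_{n,t}(\cdot;\bxi) - \nabla F_n^\infty(\cdot)\|^2$. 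This already produces the $\eta^2\tau$ scaling.

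\textbf{Drift fluctuation.} By A1, $\|\nabla F_{n,t}(\btheta^{(i)}_{n,t}) - \nabla F_{n,t}(\btheta_t)\| \le L\|\btheta^{(i)}_{n,t}-\btheta_t\|$. The deviation $\btheta^{(i)}_{n,t}-\btheta_t - \mathbb{E}[\cdot|\btheta_t]$ is itself driven by the accumulated mini-batch noise $\eta\sum_{j<i}\bu^{(j)}_n$ plus a Lipschitz feedback term. We control it by a Gr\"onwall-type recursion, which under $\eta\tau L\le 1/2$ gives a second moment of order $\eta^2 (i-1)\,\Sigma_n$, where $\Sigma_n \triangleq \sigma_B^2/B+\sigma^2_{\rm pop}/D_{n,t}$. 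Summing over $i$ with Cauchy--Schwarz gives $\mathbb{E}\|\mathbf{b}_t - \mathbb{E}\mathbf{b}_t\|^2 \lesssim \eta^2 L^2\tau\sum_i \eta^2 i\,\Sigma_n \lesssim \eta^2\tau(\eta\tau L)^2\Sigma_n \le \tfrac14\eta^2\tau\Sigma_n$. This is of the same order as the noise term and is absorbed into the constant $3$.

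\textbf{Main obstacle.} The delicate step is the drift fluctuation: separating the centered part of $\btheta^{(i)}_{n,t}$ from its mean while keeping everything in terms of $\Sigma_n$ rather than $G^2$. Otherwise a $\tau$-independent $G^2$ term would leak in, which belongs in Lemma~\ref{lem:drift} instead. I would first try a direct coupling: compare the actual iterate with the deterministic full-gradient iterate from $\btheta_t$, bound the gap by $(1+\eta L)$-recursion times the accumulated noise, and use $(1+\eta L)^\tau\le e^{1/2}<2$. If the constants do not close cleanly at $3$, I would accept an order-level bound with the constant adjusted by the splitting weights.
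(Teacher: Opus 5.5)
Your proposal is correct, but it is organized differently from the paper's proof. The paper never separates noise from drift. It propagates the centered second moment of the clean iterate directly through the $(1+\eta L)$-Lipschitz SGD map:
\[
\mathbb{E}\|\btheta^{(i+1)}_{n,t}-\mathbb{E}[\btheta^{(i+1)}_{n,t}|\btheta_t]\|^2\le(1+\eta L)^2\,\mathbb{E}\|\btheta^{(i)}_{n,t}-\mathbb{E}[\btheta^{(i)}_{n,t}|\btheta_t]\|^2+\eta^2V_{n,t},
\]
with $V_{n,t}=\sigma_B^2/B+\sigma^2_{\rm pop}/D_{n,t}$. Since $\btheta^{(1)}_{n,t}=\btheta_t$, unrolling bounds the variance of $\Delta\btheta^{\rm clean}_{n,t}$ by $\eta^2V_{n,t}\sum_{i=0}^{\tau-1}(1+\eta L)^{2i}\le e\,\eta^2\tau V_{n,t}<3\eta^2\tau V_{n,t}$. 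Jensen's inequality across devices then finishes the proof.

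In your route, bounding the feedback part $\mathbf{b}_t-\mathbb{E}[\mathbf{b}_t|\btheta_t]$ needs the centered inequality $\mathbb{E}\|\nabla F_{n,t}(\btheta^{(i)}_{n,t})-\mathbb{E}[\nabla F_{n,t}(\btheta^{(i)}_{n,t})|\btheta_t]\|^2\le L^2\,\mathbb{E}\|\btheta^{(i)}_{n,t}-\mathbb{E}[\btheta^{(i)}_{n,t}|\btheta_t]\|^2$. The uncentered bound $L\|\btheta^{(i)}_{n,t}-\btheta_t\|$ you first write is not enough. You also need the very same iterate recursion, so the paper's argument sits inside yours as a sub-step. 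Your ``direct coupling'' fallback, applied to $\btheta^{(\tau+1)}_{n,t}$ itself, would give the lemma at once and make the split unnecessary.

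What your decomposition buys is structure. The leading term is exactly $\eta^2\tau$ times the mini-batch noise level. It even carries weights $\alpha_{n,t}^2$ if you assume independent sampling across devices, an assumption the paper sidesteps via Jensen. The Lipschitz feedback is only a correction of relative order $(\eta\tau L)^2$.

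The constant does close, so no order-level fallback is needed. With the crude factor $(1+\eta L)^{2i}\le e$, the feedback part is at most $\tfrac{e}{8}\eta^2\tau\sum_n\alpha_{n,t}V_{n,t}$. This is a bit above the $\tfrac14$ you claim, which would need a finer summation, but $\tfrac32+\tfrac{3e}{8}<3$ still holds.
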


\begin{proof}
See Appendix~\ref{lemma:variance-proof}.
\end{proof}

Combining Lemmas~\ref{lem:drift} and~\ref{lem:variance} via $\mathbb{E}\|\bgamma^c_t\|^2 = \|\mathbb{E}[\bgamma^c_t|\btheta_t]\|^2 + \mathbb{E}\|\bgamma^c_t - \mathbb{E}[\bgamma^c_t|\btheta_t]\|^2$, we have
\begin{align}
\mathbb{E}\| \bgamma^c_t \|^2 \le & \,\, L^2\eta^4\tau^4 G^2 \nonumber \\
&+ 3 \eta^2 \tau \sum_{n=1}^{N} \frac{D_{n,t}}{D_t}\!\left( \frac{\sigma^2_B}{B} + \frac{\sigma^2_{\rm pop}}{D_{n,t}} \right)  \triangleq \mathcal{E}^c_t.
\label{eq:Etc-def}
\end{align}
The bound exhibits three regimes: a local-step drift term controlled by $\eta\tau$, a mini-batch variance controlled by $B$, and an empirical-to-population variance that vanishes as $D_{n,t}$ grows.

\subsubsection{Uplink OTA contribution}
The OTA aggregation produces a noisy estimate of the weighted sum $g_t = \sum_n \alpha_{n,t}\Delta\theta_{n,t}$ through the per-coordinate model in Section~\ref{subsec:ota}. Under the normalized update model in~\eqref{eq:variance-of-update-elements}, the per-coordinate MSE in~\eqref{eq:mse-ota-simplified} aggregates over $d$ coordinates to
\begin{align}
\mathbb{E}\| \bepsilon_t \|^2 = d \cdot \mathrm{MSE}^{u}_t \triangleq \mathcal{E}^u_t.
\label{eq:Etu-def}
\end{align}
The factor $d$ converts the per-coordinate aggregation error in~\eqref{eq:mse-ota-simplified} into a model-dimension scaled gradient perturbation power. The first term in $\mathrm{MSE}^{u}_t$ is the channel misalignment error controlled by $(\bv_t, p^u_{n,t})$ jointly, and the second term is the variance from the MMSE channel estimation residual and receiver noise.

\subsubsection{Total perturbation power}

\begin{lemma}[Tightened total perturbation bound]
\label{lem:total-perturbation}
Under Assumptions A1--A6 and the stepsize $\eta\tau L\le 1/2$, with the sensing perturbation interpreted to first order as in A6, the second moment of the aggregate perturbation $\bzeta_t$ satisfies
\begin{align}
\mathbb{E}\| \bzeta_t \|^2 \le 2\mathcal{E}^s_t + 2\mathcal{E}^c_t + 2\mathcal{E}^u_t.
\label{eq:zeta-total-bound}
\end{align}
\end{lemma}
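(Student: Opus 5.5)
We start from the decomposition \eqref{eq:zeta-decomposition}, $\bzeta_t=\bepsilon_t+\bgamma^s_t+\bgamma^c_t$. A naive bound $\|a+b+c\|^2\le 3(\|a\|^2+\|b\|^2+\|c\|^2)$ would give constant $3$. To reach the tighter constant $2$, we need one cross term to vanish in expectation, so that one summand can be kept separate and the remaining two bounded with $\|a+b\|^2\le 2\|a\|^2+2\|b\|^2$.

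The natural candidate is the sensing term. By A6, the sample-gradient perturbation from sensing noise is zero-mean to first order. Hence, to first order, $\mathbb{E}[\bgamma^s_t\,|\,\btheta_t,\{\bxi^{(i)}_{n,t}\}]=\bzero$ conditional on the mini-batch indices and the clean data. The plan is therefore to group $\bzeta_t=\bgamma^s_t+(\bgamma^c_t+\bepsilon_t)$ and expand the square:
\begin{align*}
\mathbb{E}\|\bzeta_t\|^2=\mathbb{E}\|\bgamma^s_t\|^2+2\,\mathbb{E}\langle \bgamma^s_t,\bgamma^c_t+\bepsilon_t\rangle+\mathbb{E}\|\bgamma^c_t+\bepsilon_t\|^2 .
\end{align*}

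The first key step is to show that $\mathbb{E}\langle\bgamma^s_t,\bgamma^c_t\rangle=0$ to first order. The term $\bgamma^c_t$ depends only on $\btheta_t$, the clean data and the mini-batch indices. Conditioning on these and applying the tower property with A6 kills this cross term.

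The second key step handles $\langle\bgamma^s_t,\bepsilon_t\rangle$. The term $\bepsilon_t$ depends on the transmitted updates, which contain the sensing noise, so orthogonality is not automatic. Here we would not try to prove that this cross term vanishes. Instead we bound $\|\bgamma^c_t+\bepsilon_t\|^2\le 2\|\bgamma^c_t\|^2+2\|\bepsilon_t\|^2$ and use Young's inequality $2\langle\bgamma^s_t,\bepsilon_t\rangle\le\|\bgamma^s_t\|^2+\|\bepsilon_t\|^2$. This gives
\begin{align*}
\mathbb{E}\|\bzeta_t\|^2\le 2\mathbb{E}\|\bgamma^s_t\|^2+2\mathbb{E}\|\bgamma^c_t\|^2+3\mathbb{E}\|\bepsilon_t\|^2 ,
\end{align*}
which leaves a coefficient of $3$ on $\mathcal{E}^u_t$ instead of $2$.

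To remove that excess, the plan is to exploit the structure of the OTA error rather than the sensing term. Conditional on the channel estimates and on the transmitted normalized coordinates, the zero-mean part of $\bepsilon_t$ (receiver noise and the MMSE estimation residual) is independent of everything local. Its cross terms with $\bgamma^s_t$ and $\bgamma^c_t$ therefore vanish. The misalignment part of $\bepsilon_t$ is treated as the deterministic bias of the channel design. Its power is already counted as the first term of \eqref{eq:mse-ota-simplified} via the second-order orthogonality in \eqref{eq:variance-of-update-elements}. Equivalently, the cleaner route is to write $\bzeta_t=\bepsilon_t+\bgamma_t$ with $\bgamma_t=\bgamma^s_t+\bgamma^c_t$, and apply $\|\bepsilon_t+\bgamma_t\|^2\le 2\|\bepsilon_t\|^2+2\|\bgamma_t\|^2$. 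Then by the first key step, $\mathbb{E}\|\bgamma_t\|^2=\mathbb{E}\|\bgamma^s_t\|^2+\mathbb{E}\|\bgamma^c_t\|^2$ to first order. Substituting \eqref{eq:Ets-def}, \eqref{eq:Etc-def} and \eqref{eq:Etu-def} yields $2\mathcal{E}^s_t+2\mathcal{E}^c_t+2\mathcal{E}^u_t$, and in fact the sharper $\mathcal{E}^s_t+\mathcal{E}^c_t$ inside the factor $2$.

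The main obstacle is the first key step: justifying rigorously that $\bgamma^s_t$ is conditionally zero-mean given the clean trajectory. $\bgamma^s_t$ is a $\tau$-step propagated difference, not a single-sample gradient perturbation. The noisy iterates also drift from the clean ones, so A6, which is stated at a fixed $\btheta$, does not apply exactly at later steps. We would linearize the recursion around the clean trajectory, using A1 and $\eta\tau L\le 1/2$. The leading term of the linearization is then a sum of A6-type zero-mean perturbations, and the higher-order remainder is dropped under the stated first-order interpretation. This is exactly the caveat written into the lemma.
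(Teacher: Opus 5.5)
Your final ``cleaner route'' matches the paper's proof: condition on $\btheta_t$, the clean data, and the mini-batch index sequence; linearize the noisy--clean recursion so that A6 makes $\bgamma^s_t$ conditionally zero-mean; kill $\mathbb{E}[\bgamma_t^{s\top}\bgamma^c_t]$ by the tower property; and apply Young's inequality to $(\bgamma^s_t+\bgamma^c_t)+\bepsilon_t$. The earlier detour through $\langle\bgamma^s_t,\bepsilon_t\rangle$ and the OTA misalignment structure is unnecessary, and your claimed ``sharper'' bound with $\mathcal{E}^s_t+\mathcal{E}^c_t$ inside the factor $2$ is the stated bound rewritten, not a sharper one.
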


\begin{proof}
Let $\Xi_t$ denote the complete mini-batch index sequence in round $t$. Under the first-order sensing model of A6, conditioning on $\btheta_t$, the clean data, and $\Xi_t$ and linearizing the noisy--clean local recursion gives $\mathbb{E}[\bgamma_t^s\,|\,\btheta_t,\Xi_t]=0$. Since $\bgamma_t^c$ is fixed under this conditioning, the tower property yields $\mathbb{E}[\bgamma_t^{s\top}\bgamma_t^c\,|\,\btheta_t]=0$. Applying Young's inequality to $\bzeta_t=(\bgamma_t^s+\bgamma_t^c)+\bepsilon_t$ gives
\begin{align}
\|\bzeta_t\|^2\le 2\|\bgamma_t^s+\bgamma_t^c\|^2+2\|\bepsilon_t\|^2.
\end{align}
Taking expectations and using the above orthogonality,
\begin{align}
\mathbb{E}\|\bzeta_t\|^2 &\le 2\mathbb{E}\|\bgamma_t^s\|^2+2\mathbb{E}\|\bgamma_t^c\|^2+2\mathbb{E}\|\bepsilon_t\|^2 \notag\\
&\le 2\mathcal{E}^s_t+2\mathcal{E}^c_t+2\mathcal{E}^u_t.
\end{align}
This completes the proof.
\end{proof}

Define the per-round learning penalty as
\begin{align}
\Psi_t \triangleq 2\mathcal{E}^s_t + 2\mathcal{E}^c_t + 2\mathcal{E}^u_t,
\label{eq:Psi-def}
\end{align}
which gathers all the sensing, computation, and communication effects in round~$t$. The penalty is fully expressed in terms of system design variables: sensing history $\{(m_{n,\ell}, p^s_{n,\ell}, q_{n,\ell}, \overline{D}_{n,\ell})\}_{\ell\le t}$, local SGD configuration $(\tau, B, \eta)$, uplink powers $p^u_{n,t}$, MMSE channel estimates $\widehat\bh_{n,t}$ with error variances $\sigma^2_{e,n,t}$, and the BS combining vector $\bv_t$.

With the per-round penalty $\Psi_t$ in place, we now state the main convergence result.

\begin{theorem}
\label{thm:nonconvex}
Under Assumptions A1--A6 and the stepsize condition $\eta\tau L \le 1/2$, the iterates generated by the proposed scheme satisfy
\begin{align}
\frac{1}{T}\sum_{t=0}^{T-1} &\,\,\mathbb{E}\big[\|\nabla F_t(\btheta_t)\|^2\big] \le \frac{4\big(\mathbb{E}[F_0(\btheta_0)] - F_{\inf}\big)}{\eta\tau T} \nonumber \\
&+ \Big(\frac{4}{\eta^2\tau^2} + \frac{2L}{\eta\tau}\Big)\frac{1}{T}\sum_{t=0}^{T-1} \Psi_t + \frac{4}{\eta\tau T}\sum_{t=0}^{T-1}\delta_t,
\label{eq:nonconvex-main-bound}
\end{align}
where $\delta_t \triangleq \mathbb{E}\big[F_{t+1}(\btheta_{t+1}) - F_t(\btheta_{t+1})\big]$ is the per-round objective drift induced by the dataset accumulation from round $t$ to round $t+1$, bounded under A2 by $\delta_t \le \Delta F \sum_n \overline{D}_{n,t+1}/D_{t+1}$.
\end{theorem}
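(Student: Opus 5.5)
Proof strategy: the standard one-step descent argument for smooth non-convex objectives, applied to the update \eqref{eq:bs-update-perturbed}, $\btheta_{t+1}-\btheta_t=-\eta\tau\nabla F_t(\btheta_t)+\bzeta_t$. Because $F_t$ changes with $t$, we telescope $F_{t+1}(\btheta_{t+1})-F_t(\btheta_t)$ and separate out the dataset-drift term $\delta_t$.

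\textbf{Step 1 (one-round descent).} By A1, $F_t$ is $L$-smooth, so
\begin{align*}
F_t(\btheta_{t+1}) \le F_t(\btheta_t) - \eta\tau\|\nabla F_t(\btheta_t)\|^2 + \langle \nabla F_t(\btheta_t),\bzeta_t\rangle + \tfrac{L}{2}\|{-\eta\tau\nabla F_t(\btheta_t)}+\bzeta_t\|^2 .
\end{align*}
Do not split $\bzeta_t$ into bias and variance, since Lemma~\ref{lem:total-perturbation} only controls $\mathbb{E}\|\bzeta_t\|^2$. Instead:
\begin{itemize}
\item Bound the cross term by Young's inequality: $\langle \nabla F_t,\bzeta_t\rangle \le \frac{\eta\tau}{4}\|\nabla F_t\|^2 + \frac{1}{\eta\tau}\|\bzeta_t\|^2$.
\item Bound the quadratic term by $\frac{L}{2}\|a+b\|^2\le L\|a\|^2+L\|b\|^2$, giving $L\eta^2\tau^2\|\nabla F_t\|^2 + L\|\bzeta_t\|^2$.
\end{itemize}
Under $\eta\tau L\le 1/2$ we have $L\eta^2\tau^2\le \eta\tau/2$. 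The gradient coefficient is then at least $\eta\tau(1-\tfrac14-\tfrac12)=\eta\tau/4$, which is what produces the factor $4/(\eta\tau)$. This yields
\begin{align*}
\tfrac{\eta\tau}{4}\mathbb{E}\|\nabla F_t(\btheta_t)\|^2 \le \mathbb{E}[F_t(\btheta_t)-F_t(\btheta_{t+1})] + \Big(\tfrac{1}{\eta\tau}+L\Big)\mathbb{E}\|\bzeta_t\|^2 .
\end{align*}
Finally, apply Lemma~\ref{lem:total-perturbation} to replace $\mathbb{E}\|\bzeta_t\|^2$ with $\Psi_t$.

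\textbf{Step 2 (telescoping with drift).} Write
\begin{align*}
F_t(\btheta_t)-F_t(\btheta_{t+1}) = [F_t(\btheta_t)-F_{t+1}(\btheta_{t+1})] + [F_{t+1}(\btheta_{t+1})-F_t(\btheta_{t+1})],
\end{align*}
where the second bracket has expectation $\delta_t$. Sum over $t=0,\dots,T-1$. The first brackets telescope to $\mathbb{E}[F_0(\btheta_0)]-\mathbb{E}[F_T(\btheta_T)]\le \mathbb{E}[F_0(\btheta_0)]-F_{\inf}$ by A2. Multiply by $4/(\eta\tau T)$. The perturbation coefficient becomes $\frac{4}{\eta\tau}(\frac{1}{\eta\tau}+L)=\frac{4}{\eta^2\tau^2}+\frac{4L}{\eta\tau}$.

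\textbf{Main obstacle: matching the stated $2L/(\eta\tau)$ constant.} This needs the sharper expansion of the quadratic. Expand exactly,
\begin{align*}
\tfrac{L}{2}\|{-\eta\tau\nabla F_t}+\bzeta_t\|^2 = \tfrac{L}{2}\eta^2\tau^2\|\nabla F_t\|^2 - L\eta\tau\langle\nabla F_t,\bzeta_t\rangle + \tfrac{L}{2}\|\bzeta_t\|^2 .
\end{align*}
Merge the middle term with the linear cross term into $(1-L\eta\tau)\langle\nabla F_t,\bzeta_t\rangle$. Since $0<1-L\eta\tau\le 1$, Young's inequality with the same weights still applies. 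The gradient coefficient is then at least $\eta\tau(1-\tfrac14-\tfrac14)=\eta\tau/2\ge \eta\tau/4$, and the $\|\bzeta_t\|^2$ coefficient is $\frac{1}{\eta\tau}+\frac{L}{2}$. Multiplying by $4/(\eta\tau)$ gives exactly $\frac{4}{\eta^2\tau^2}+\frac{2L}{\eta\tau}$, so I would use this version.

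\textbf{Drift bound.} Since $F_{t+1}$ and $F_t$ are averages over nested datasets,
\begin{align*}
F_{t+1}-F_t = \frac{\sum_n \overline{D}_{n,t+1}}{D_{t+1}}\,(\text{avg over new samples} - F_t).
\end{align*}
By A2 the bracketed difference is at most $\Delta F$, which gives $\delta_t\le \Delta F\sum_n\overline{D}_{n,t+1}/D_{t+1}$.
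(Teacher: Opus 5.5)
Your proposal is correct and, in its final (sharper) form, follows the paper's proof essentially step for step: the exact expansion of the smoothness inequality, merging the cross terms into $(1-L\eta\tau)\nabla F_t(\btheta_t)^\top\bzeta_t$, Young's inequality with weight $\eta\tau/2$, Lemma~\ref{lem:total-perturbation}, and telescoping with the drift $\delta_t$. The differences are cosmetic. The paper first conditions on $\btheta_t$ and applies Young to the bias $\bar\bzeta_t$, but then bounds $\|\bar\bzeta_t\|^2\le\mathbb{E}\|\bzeta_t\|^2$ by Jensen, so your pathwise application gives the same constants. In addition, your nested-average derivation of $\delta_t\le\Delta F\sum_n\overline{D}_{n,t+1}/D_{t+1}$ fills in a step the paper only asserts.
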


\begin{proof}
See Appendix~\ref{app:proof-nonconvex}.
\end{proof}

Inequality~\eqref{eq:nonconvex-main-bound} decomposes the average squared gradient norm into three parts. The first term is the optimization term, decaying as $\mathcal{O}(1/T)$ from the initialization gap. The second is the per-round learning penalty $\Psi_t$, which collects the sensing, computation, and communication distortions and does not vanish with $T$; it sets the accuracy floor that the system design controls. The third is the objective-drift term $\delta_t$, which arises because the global objective $F_t$ changes as devices accumulate freshly sensed data. Each term in the penalty maps to a distinct design lever: $\mathcal{E}^s_t$ decreases with improved sensing quality through higher sensing power and resolution; $\mathcal{E}^c_t$ shrinks with larger mini-batches, smaller local horizons, and larger datasets that reduce the empirical-to-population gap; and $\mathcal{E}^u_t$ shrinks with better channel estimation, uplink power control, and combining. The drift $\delta_t$ reflects the cost of changing the learning objective by sensing new data, and is bounded under A2 by the per-round change in the dataset composition.

\begin{remark}
\label{remark:drift-bounded}
The drift $\delta_t \le \Delta F \sum_n \overline{D}_{n,t+1}/D_{t+1}$ is set by the fraction of newly sensed data appended in round $t+1$. As the accumulated datasets grow, this fraction vanishes and the bound recovers the standard fixed-objective non-convex rate.
\end{remark}

\section{Problem Formulation and Solution Approach}
\label{sec:problem-solution}
In this section, we first formalize the optimization problem arising from the interaction of the sensing, computation, and communication modules described earlier. We then develop a structured solution method that exploits the problem’s separability and convexity properties.

Each communication round contributes to the global model update and is affected by three distortion sources: sensing noise, local computation limitations, and uplink OTA aggregation. These effects are captured through the per-round penalty $\Psi_t$, given in~\eqref{eq:Psi-def}, in which $\mathcal{E}_t^{s}$, $\mathcal{E}_t^{c}$, and $\mathcal{E}_t^{u}$ follow from the convergence bounds in Section~\ref{sec:convergence}.  
Improvements in sensing quality decrease $\mathcal{E}_t^{s}$, tighter SGD variance control decreases $\mathcal{E}_t^{c}$, and better OTA aggregation at the server reduces $\mathcal{E}_t^{u}$.



\subsection{Joint Optimization Problem}
Recall the total latency and energy requirements calculated in \eqref{eq:round-latency} and \eqref{eq:round-energy}, respectively. To balance energy consumption, latency, and learning quality, we consider a weighted per-round objective
\begin{align}
J_t = \omega_E E_t + \omega_T T_t + \omega_L \Psi_t,
\label{eq:Jt}
\end{align}
where the positive weights $(\omega_E,\omega_T,\omega_L)$ are set based on system priorities. The overall optimization problem then minimizes the sum of these per-round objectives over all communication rounds.

For each communication round~$t \in \{1, \ldots, T\}$, we collect all device- and BS-side optimization variables into
\begin{align}
\Theta_t
\triangleq
\big\{\overline{D}_{n,t},\, q_{n,t},\, z_{n,t}^{(m)},\, p_{n,t}^{s},\, f^{ c}_{n,t},\, p_{n,t}^{u} \big\}_{n=1}^{N} \cup \{\bv_t\}.
\label{eq:Thetat}
\end{align}
All of these variables are defined in Section~\ref{sec:system-model}. Uplink pilot power is not optimized and is assumed fixed during all channel estimation slots.

Let $\Theta \triangleq \{\Theta_t\}_{t=1}^{T}$. The joint optimization problem over $T$ rounds is
\begin{subequations}
\label{eq:main-opt}
\begin{align}
\underset{\Theta}{\min} \quad &
\sum_{t=1}^{T} J_t
\\
\text{s.t.} \quad &
0 < p_{n,t}^{s} \le P_{n}^{s,\max},
\\
& 0 \le p_{n,t}^{u} \le P_{n}^{u,\max},
\\
& f_{n}^{\min} \le f_{n,t}^c \le f_{n}^{\max},
\\
& \overline{D}_{n,t} \in \{1, \ldots, D_n^{\max}\},
\\
& q_{n,t} \in \mathcal{Q}_n,
\\
& z_{n,t}^{(m)} \in \{0,1\}, \quad \sum_{m \in \mathcal{M}_n} z_{n,t}^{(m)} = 1,
\\
& \bv_t \in \mathbb{C}^{M}.
\end{align}
\end{subequations}

The problem is non-convex due to the OTA aggregation term, the coupling between sensing parameters and learning performance, and the discrete nature of sample count, modality, and resolution choices. Since the rounds couple only through the accumulated dataset sizes determined by past decisions, we solve~\eqref{eq:main-opt} in a per-round manner, minimizing $J_t$ over $\Theta_t$ given the history from rounds $\ell<t$.

To develop a tractable solution, we group variables with similar functional roles as
\begin{align}
\Theta = \big\{ \Theta_{\mathrm{sens}},\, \Theta_{\mathrm{comp}},\, \Theta_{\mathrm{comm}},\, \Theta_{\mathrm{disc}} \big\},
\end{align}
where
\begin{align}
\begin{cases}
\Theta_{\mathrm{sens}} = \{p_{n,t}^{s}\}_{n,t} \\
\Theta_{\mathrm{comp}} = \{f^{ c}_{n,t}\}_{n,t} \\
\Theta_{\mathrm{comm}} = \{p_{n,t}^{u}\}_{n,t} \cup \{\bv_t\}_{t} \\
\Theta_{\mathrm{disc}} = \{\overline{D}_{n,t},\, q_{n,t},\, z_{n,t}^{(m)}\}_{n,t}
\end{cases}.
\label{eq:Theta-blocks}
\end{align}
Each block has a distinct functional role, and when the other blocks are fixed, the subproblem for each block becomes either convex or low-dimensional.

\subsection{Alternating Optimization Framework}
We adopt an alternating optimization (AO) method that updates one block at a time while holding the others fixed. This approach is effective because each block leads to a simplified subproblem that is either solvable in closed form or via a one-dimensional search.

\subsubsection{Sensing power update}
With $(\overline{D}_{n,t}, q_{n,t}, z_{n,t}^{(m)})$ fixed, the sensing contribution to $J_t$ reduces to a strictly convex function of $p_{n,t}^{s}$:
\begin{align}
\omega_E E_{n,t}^{s} + \omega_L \lambda_{n,t}^{s} \sigma_{v,n,t}^{2} = \omega_E p_{n,t}^{s} T_{n,t}^{s} + \omega_L \lambda_{n,t}^{s} \frac{c_{1,n}^{(m_{n,t})}}{p_{n,t}^{s} q_{n,t}},
\label{eq:ps}
\end{align}
where $\lambda_{n,t}^{s} \triangleq 8\eta^2\tau^2c_n^{s}\overline{D}_{n,t}/D_t$ is the per-device sensing coefficient in round~$t$ from~\eqref{eq:Ets-def}. Setting the derivative of \eqref{eq:ps} to zero yields the closed-form update
\begin{align}
p_{n,t}^{s\star} = \sqrt{
\frac{\omega_L \lambda_{n,t}^{s} c_{1,n}^{(m_{n,t})} }{\omega_E\, q_{n,t} T_{n,t}^{s}}},
\label{eq:ps-update}
\end{align}
followed by projection onto $(0,\,P_{n}^{s,\max}]$.

\subsubsection{ CPU frequency update}
The CPU frequencies are coupled through the maximum latency in~\eqref{eq:round-latency}. Introducing an auxiliary bottleneck latency $\chi_t$, the CPU frequency subproblem is
\begin{align}
\begin{aligned}
\underset{\{f^{\rm c}_{n,t}\}_{n=1}^{N},\,\chi_t}{\min}\quad &
\omega_E\kappa_c\kappa\tau B\sum_{n=1}^{N}\big(f^{ c}_{n,t}\big)^2+\omega_T\chi_t\\
\text{s.t.}\quad &
T^{s}_{n,t}+\frac{\kappa\tau B}{f^{c}_{n,t}}\le\chi_t, \\
&f_n^{\min}\le f^{c}_{n,t}\le f_n^{\max},\quad \forall n .
\end{aligned}
\label{eq:cpu-subproblem}
\end{align}
For a fixed feasible $\chi_t$, the energy-minimizing frequency is
\begin{align}
f_{n,t}^{c\star}(\chi_t)=\max\left\{f_n^{\min},\,\frac{\kappa\tau B}{\chi_t-T^{s}_{n,t}}\right\}.
\label{eq:f-update}
\end{align}
Hence, $\chi_t$ is obtained from the one-dimensional convex problem
\begin{align}
\chi_t^\star=
\underset{\chi_t^{\min} \leq\chi_t\leq\chi_t^{\max}}{\arg\min}\;
\omega_E\kappa_c\kappa\tau B
\sum_{n=1}^{N}\big(f_{n,t}^{ c\star}(\chi_t)\big)^2
+\omega_T\chi_t,
\label{eq:chi-update}
\end{align}
where $\chi_t^{\min}\triangleq\max_n\left(T^{s}_{n,t}+\kappa\tau B/f_n^{\max}\right)$ and $\chi_t^{\max}\triangleq\max_n\left(T^{s}_{n,t}+\kappa\tau B/f_n^{\min}\right)$. The scalar problem can be solved efficiently by one-dimensional search, after which $f_{n,t}^{ c\star}=f_{n,t}^{ c\star}(\chi_t^\star)$.


\subsubsection{Uplink power and combining vector update}
For fixed uplink powers, the OTA distortion is quadratic in $\bv_t$, yielding the linear MMSE-type update
\begin{align}
\bv_t = \left(\bA_t^{H}\bA_t + \gamma_t \bI \right)^{-1} \bA_t^{H}\balpha_t,
\label{eq:v-update}
\end{align}
where $\bA_t = [\sqrt{p_{1,t}^{u}}\widehat{\bh}_{1,t},\dots,\sqrt{p_{N,t}^{u}}\widehat{\bh}_{N,t}]^{H}$, $\balpha_t = [\alpha_{1,t},\dots,\alpha_{N,t}]^{H}$, and $\gamma_t = \sum_n p_{n,t}^{u}\sigma_{e,n,t}^{2} + \sigma_z^2$.

Given $\bv_t$, each device solves a one-dimensional optimization over $0 \le p_{n,t}^{u} \le P_{n}^{u,\max}$, which is a bounded subproblem that can
be handled efficiently by line search or projected Newton updates.

\subsubsection{Discrete variable updates}
We now describe how to update the discrete variables $(\overline{D}_{n,t}, q_{n,t}, z_{n,t}^{(m)})$ for each device. The sensed sample count is bounded as $1 \le \overline{D}_{n,t} \le D_n^{\max}$, and we write the candidate set as
\begin{align}
\mathcal{D}_n \triangleq \{1,2,\ldots,D_{n}^{\max}\},
\end{align}
so that the discrete design variables at device~$n$ are $m \in \mathcal{M}_n$, $q \in \mathcal{Q}_n$, and $\overline{D} \in \mathcal{D}_n$.

When updating device~$n$ in round~$t$, all other variables are held fixed. For a candidate count $\overline{D}$, the accumulated dataset size is $D_{n,t} = D_{n,t-1} + \overline{D}$ from~\eqref{eq:data-accumulation}, and the network total is $D_t = D_{n,t} + \sum_{j \neq n} D_{j,t}$. For a candidate triple $(m,q,\overline{D})$, the sensing latency and energy are
\begin{align}
T_{n,t}^{s}(m,q,\overline{D}) &= \frac{q\,\overline{D}}{r^{(m)}_{{\rm s},n}}, \\
E_{n,t}^{s}(m,q,\overline{D}) &= p_{n,t}^{s} T_{n,t}^{s}(m,q,\overline{D}),
\end{align}
and the sensing noise variance $\sigma_{v,n,t}^{2}(m,q)$ is given in~\eqref{eq:sensing-noise}.
The computation latency and energy follow the mini-batch model in~\eqref{eq:comp-time}--\eqref{eq:comp-energy} and are independent of $\overline{D}$:
\begin{align}
T_{n,t}^{c} = \frac{\kappa \tau B}{f^c_{n,t}}, \qquad
E_{n,t}^{c} = \kappa_c \kappa \tau B (f^c_{n,t})^{2}.
\end{align}
Define the bottleneck latency among all devices except device~$n$ as $\chi_{-n,t} \triangleq \max_{j\neq n}\left(T^s_{j,t}+T^c_{j,t}\right)$. The local surrogate discrete cost at device~$n$ in round~$t$ is
\begin{align}
J_{n,t}^{\mathrm{disc}}(m,q,\overline{D}) \triangleq &\,
\omega_E \big( E_{n,t}^{s}(m,q,\overline{D}) + E_{n,t}^{c} \big) \nonumber\\ 
&+\omega_T
\max\left\{
\chi_{-n,t},
T^s_{n,t}(m,q,\overline{D})+T^c_{n,t}
\right\} \nonumber\\
&+\omega_L \Big(
\lambda_{n,t}^{s}\, \sigma_{v,n,t}^{2}(m,q)+\frac{\lambda_{t}^{c}}{D_t}
\Big),
\end{align}
where $D_t = D_{n,t-1} + \overline{D} + \sum_{j \neq n} D_{j,t}$, and $\lambda_t^{c} \triangleq 6\eta^2\tau N\sigma_{\rm pop}^2$ from~\eqref{eq:Etc-def}. The first two terms account for the sensing and computation energy and the round bottleneck latency, while the last term reflects the impact of sensing noise and the accumulated sample count on the round-$t$ learning penalty $\Psi_t$.

The discrete update at device~$n$ is the finite optimization
\begin{align}
(m_{n,t}^{\star}, q_{n,t}^{\star}, \overline{D}_{n,t}^{\star}) = \underset{{\substack{m \in \mathcal{M}_n,\, q \in \mathcal{Q}_n,\, \overline{D} \in \mathcal{D}_n}}}{\arg\min}
J_{n,t}^{\mathrm{disc}}(m,q,\overline{D}).
\end{align}
This rule specifies how the modality, resolution, and number of newly sensed samples are chosen. Since $\mathcal{M}_n$, $\mathcal{Q}_n$, and $\mathcal{D}_n$ are finite and typically small, the optimization is carried out by direct search over all candidate triples $(m,q,\overline{D})$.

The overall AO procedure for the sensing, computation, and communication co-design is summarized in Algorithm~\ref{alg:ao}.

\begin{algorithm}[t]
\caption{Alternating Optimization for Joint Sensing--Computation--Communication Design}
\label{alg:ao}
\DontPrintSemicolon
\KwIn{System parameters, feasible sets $\{\mathcal{Q}_n,\mathcal{D}_n\}_{n=1}^N$, modality set $\mathcal{M}_n$, weights $(\omega_E,\omega_T,\omega_L)$, and constraints.}
\KwOut{Per-round decisions $\{\Theta_t\}_{t=1}^T$.}

Initialize $\{\Theta_t^{(0)}\}_{t=1}^T$\;
\For{$t=1,\ldots,T$}{
    $k\leftarrow 0$\;
    \Repeat{$J_t$ converges or $k=k_{\max}$}{
        Update $\{p^{s,(k+1)}_{n,t}\}_{n=1}^N$ by \eqref{eq:ps-update} and project onto $(0,P^{s,\max}_n]$\;
        Update $\chi_t^{(k+1)}$ by \eqref{eq:chi-update} and $\{f^{\rm c,(k+1)}_{n,t}\}_{n=1}^N$ by \eqref{eq:f-update} and project onto $[f_n^{\min},f_n^{\max}]$\;
        Update $\bv_t^{(k+1)}$ by \eqref{eq:v-update}\; 
        Update $\{p^{u,(k+1)}_{n,t}\}_{n=1}^N$ by solving the 1-D subproblems and projecting onto $[0,P^{u,\max}_n]$\;
        \For{$n=1,\ldots,N$}{
            $(m^\star,q^\star,\overline{D}^\star)
            \leftarrow \argmin 
            J^{\mathrm{disc}}_{n,t}(m,q,\overline{D})$\;
            Set $(z^{(m^\star)}_{n,t},q_{n,t},\overline{D}_{n,t})^{(k+1)}
            \leftarrow (1,q^\star,\overline{D}^\star)$, with other $z^{(m)}_{n,t}=0$\;
        }
        $k\leftarrow k+1$\;
    }
    $\Theta_t\leftarrow \Theta_t^{(k)}$\;
}

\end{algorithm}

\begin{figure*}[t]
    \centering

    \begin{subfigure}{0.3\textwidth}
        \centering
        \includegraphics[width=\columnwidth]{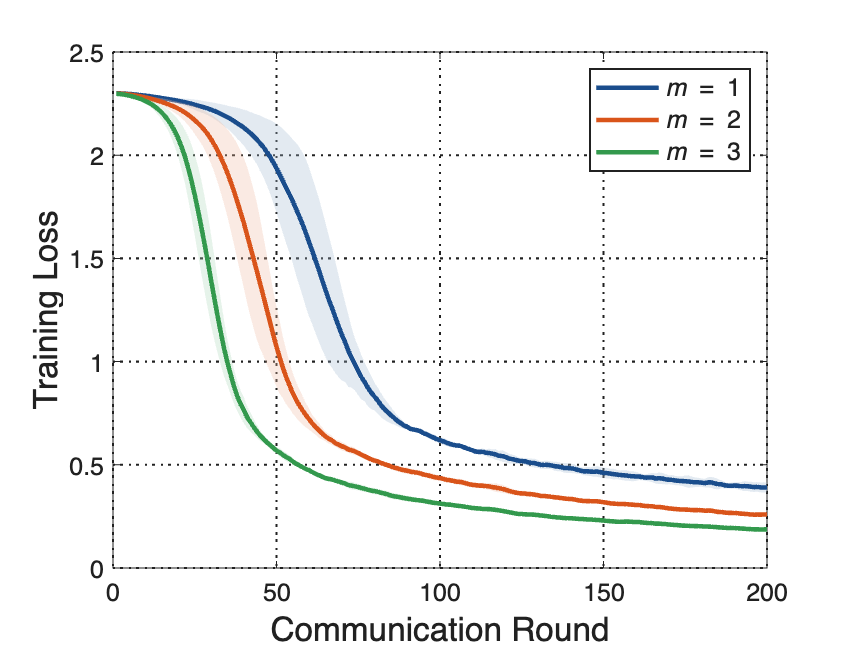}
    \caption{Modality training loss (MNIST).}
    \label{fig:train-loss}
    \end{subfigure}
    \hfill
    \begin{subfigure}{0.3\textwidth}
        \centering
        \includegraphics[width=\columnwidth]{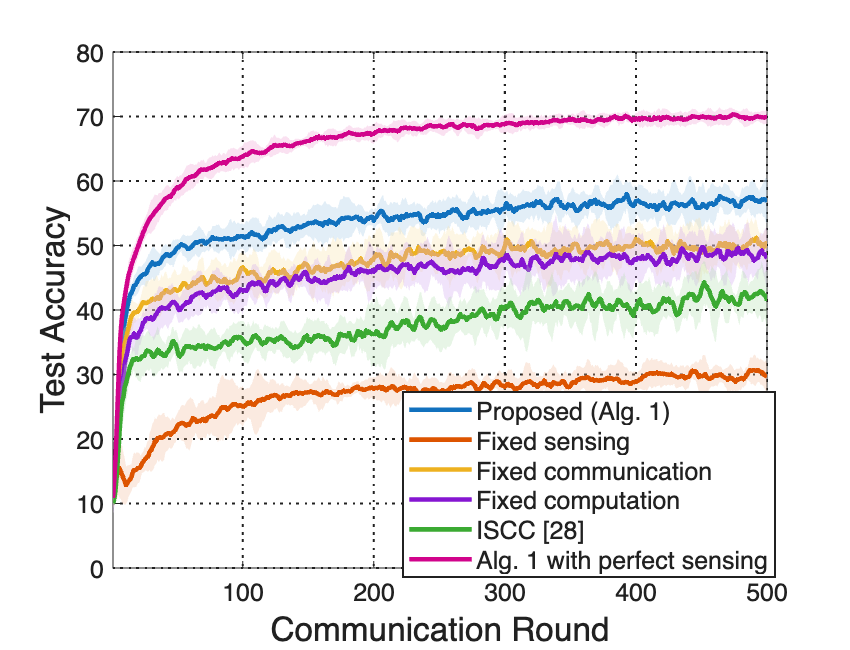}
        \caption{Test accuracy (CIFAR-10), i.i.d.}
        \label{fig:TestAcc-cifar10-iid}
    \end{subfigure}
    \hfill
    \begin{subfigure}{0.3\textwidth}
        \centering
        \includegraphics[width=\columnwidth]{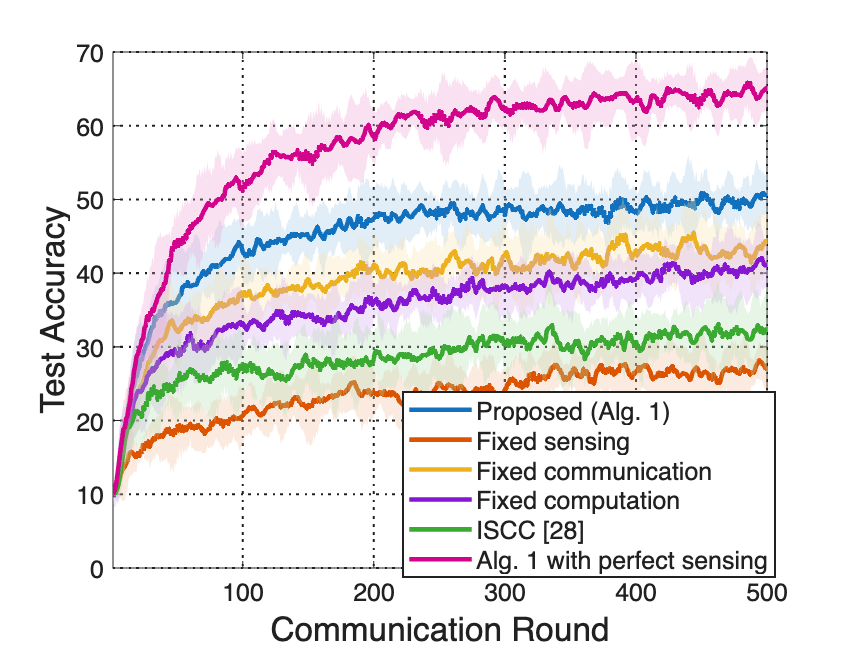}
        \caption{Test accuracy (CIFAR-10), non-i.i.d.}
        \label{fig:TestAcc-cifar10-noniid}
    \end{subfigure}

    \caption{Simulation results with the MNIST and CIFAR-10 datasets. (a) Training loss for the three sensing modalities in Table~\ref{tab:modality}. (b) Test accuracy for the proposed scheme and baselines with i.i.d. data partitioning. (c) Same as (b) for non-i.i.d. partitioning.}
    \label{fig:cifar10-results} 
      \vspace{-1em}
\end{figure*}

\subsection{Computational Complexity}


We analyze the per-round cost of Algorithm~\ref{alg:ao}. Let $K$ denote the number of AO iterations per round, and let $I_{c}$ and $I_{u}$ denote the numbers of iterations used by the one-dimensional searches for $\chi_t$ and $p^u_{n,t}$, respectively.

Within one AO iteration, the five blocks scale as follows. The sensing power update is closed-form and costs $\mathcal{O}(N)$, while the CPU frequency block requires a one-dimensional search over $\chi_t$ with $\mathcal{O}(N)$ cost per search iteration, yielding $\mathcal{O}(NI_{ c})$. The combining vector $\bv_t$ requires constructing an $M \times M$ Gram matrix from the $N$ estimated channels and solving one linear system, costing $\mathcal{O}(NM^2 + M^3)$. The uplink power update solves a bounded one-dimensional problem per device (convex in the amplitude $\sqrt{p^u_{n,t}}$), costing $\mathcal{O}(N I_{ u})$. The discrete update enumerates the triple $(m,q,\overline{D})$ for each device over $\mathcal{M}_n \times \mathcal{Q}_n \times \mathcal{D}_n$, costing $\mathcal{O}(N|\mathcal{M}||\mathcal{Q}||\mathcal{D}|)$, where $|\mathcal{M}| = \max_n |\mathcal{M}_n|$, $|\mathcal{Q}| = \max_n |\mathcal{Q}_n|$ and $|\mathcal{D}| = \max_n |\mathcal{D}_n|$. Over $K$ AO iterations, the per-round optimization cost is
\begin{align}
\mathcal{O}\!\Big( K \big(N I_{c}+NM^2+M^3+N I_{ u}+N|\mathcal{M}||\mathcal{Q}||\mathcal{D}|\big) \Big).
\end{align}
On top of this, each device runs $\tau$ local SGD steps on mini-batches of size $B$, costing $\mathcal{O}(N \tau B C_{\rm model})$ across the network in a round, where $C_{\rm model}$ is the cost of a single forward-backward pass of the model. At the BS, the OTA aggregation produces a length-$d$ estimate by applying $\bv_t$ to the received signal at cost $\mathcal{O}(Md)$ per round, and MMSE channel estimation from the pilot phase contributes $\mathcal{O}(NM\pi_p)$.



\section{Numerical Experiments}
\label{sec:numerical-results}

\subsection{Simulation Setup}
We conduct experiments using the MNIST~\cite{Lecun1998mnist}, CIFAR-10, and CIFAR-100~\cite{krizhevsky2009learning} datasets.
For training on MNIST, we use the convolutional neural network (CNN) architecture with fully connected layers. For CIFAR-10 and CIFAR-100, we respectively employ the ResNet-18 and ResNet-20 architectures. Unless stated otherwise, $M=8$, $N=20$, $\tau = 5$, $B=16$, $\sigma_w^2=\sigma_z^2 =1$, $D_n^{\max}=100$, $\mathcal{Q}_n = \{1,2,4\}$, $f_n^{\min}=0.5$ GHz, $f_n^{\max}=2$ GHz, and $P_n^{s,\max}=1$ W for all devices. The average SNR is set to $10$ dB for the downlink and uplink channels (both pilot and data phases). The downlink bandwidth is set to $10$ MHz, and the uplink symbol rate is set to $10$ Msymbol/s. Both i.i.d.\ and non-i.i.d.\ data distributions are considered across the devices. 


At each communication round~$t \in [T]$, device~$n$ selects a sensing configuration consisting of a modality, resolution, power, and the number of newly sensed samples. The sensing process follows the model in Section~\ref{subsec:sensing}: clean samples drawn from the underlying dataset are corrupted according to the selected sensing configuration and then used for local training. The sensing modalities are modeled as abstract configurations with different quality--latency tradeoffs, as summarized in Table~\ref{tab:modality}, where the rate $r_{\rm s}^{(m)}$ is measured in \emph{samples/s}. This captures heterogeneous sensing capabilities without tying the simulations to specific sensor physics.

Device sensing heterogeneity is incorporated by allowing each device to access only a subset of the available sensing modalities. Among the $N=20$ devices, $6$ have access to modality~$m=1$ only, $7$ can choose between $m=1$ and $m=2$, and the remaining $7$ have access to all three modalities.




\begin{table}
\centering
\caption{Sensing modality parameters used in simulations.}
\label{tab:modality}
\begin{tabular}{c c c l}
\hline
Modality $m$ & $c^{(m)}_{1}$ & $r_s^{(m)}$ & Description \\
\hline
1 & $1.0$ & $20$ & Low sensing quality, fast acquisition \\
2 & $0.5$ & $10$  & Moderate quality and acquisition speed \\
3 & $0.2$ & $1$  & High sensing quality, slow acquisition \\
\hline
\end{tabular}
\vspace{-1em}
\end{table}

\subsection{Results and Discussion}
Fig.~\ref{fig:cifar10-results}(a) shows the training loss for the three sensing modalities in Table~\ref{tab:modality} on the MNIST dataset under i.i.d.\ data distribution with $T=200$. Modality~$m=3$ achieves the fastest convergence and lowest steady-state loss, followed by $m=2$ and $m=1$. This confirms that higher sensing quality reduces data distortion, which improves local gradient estimation and speeds up convergence.

Fig.~\ref{fig:cifar10-results}(b) and Fig.~\ref{fig:cifar10-results}(c) show the test accuracy versus communication rounds ($T=500$) for the proposed scheme and benchmark schemes on the CIFAR-10 dataset under i.i.d.\ and non-i.i.d.\ data partitions. Several benchmark schemes are included to isolate the contribution of each resource-control block in the proposed joint design. Specifically, \emph{fixed communication} disables the communication-resource adaptation, \emph{fixed computation} disables the CPU frequency adaptation, \emph{fixed sensing} keeps the sensing configuration fixed (we assume $m=2$ for all devices) throughout training, and \emph{ISCC} optimizes the number of newly sensed samples along with the computation/communication parameters (the scheme given in~\cite{Liang2025ISCC}). The proposed joint sensing--computation--communication design achieves the highest accuracy among all practical schemes. For instance, in the non-i.i.d.\ case, the proposed scheme achieves approximately $18\%$ higher accuracy than the ISCC scheme in~\cite{Liang2025ISCC}. The proposed scheme with perfect sensing and perfect CSI serves as an oracle upper bound, corresponding to the ideal case in which the sensed data are free of measurement noise and the channel estimates are error-free.

\begin{figure*}[t]
    \centering

    \begin{subfigure}{0.3\textwidth}
        \centering
        \includegraphics[width=\columnwidth]{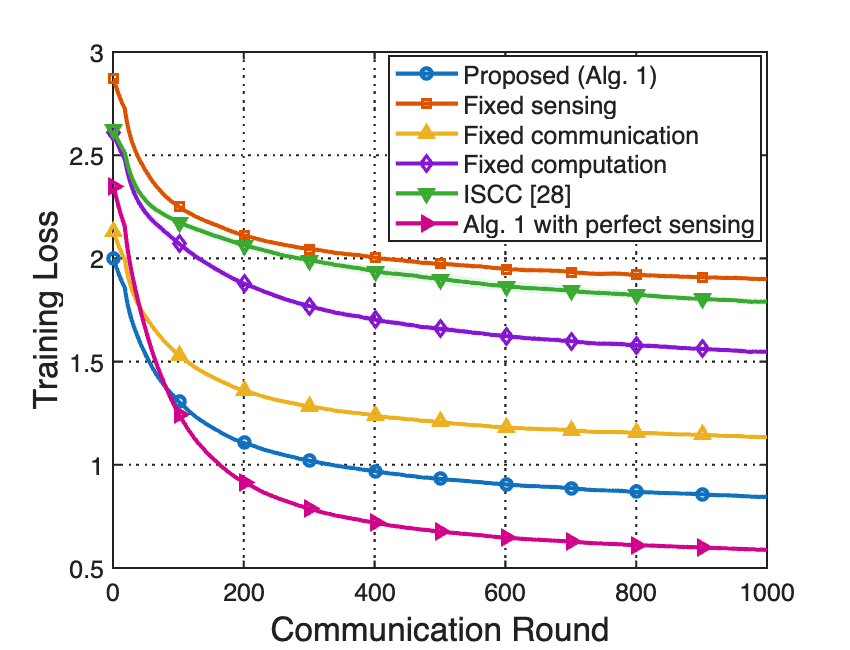}
    \caption{Training loss, i.i.d.}
    \label{fig:train-loss-cifar100}
    \end{subfigure}
    \hfill
    \begin{subfigure}{0.3\textwidth}
        \centering
        \includegraphics[width=\columnwidth]{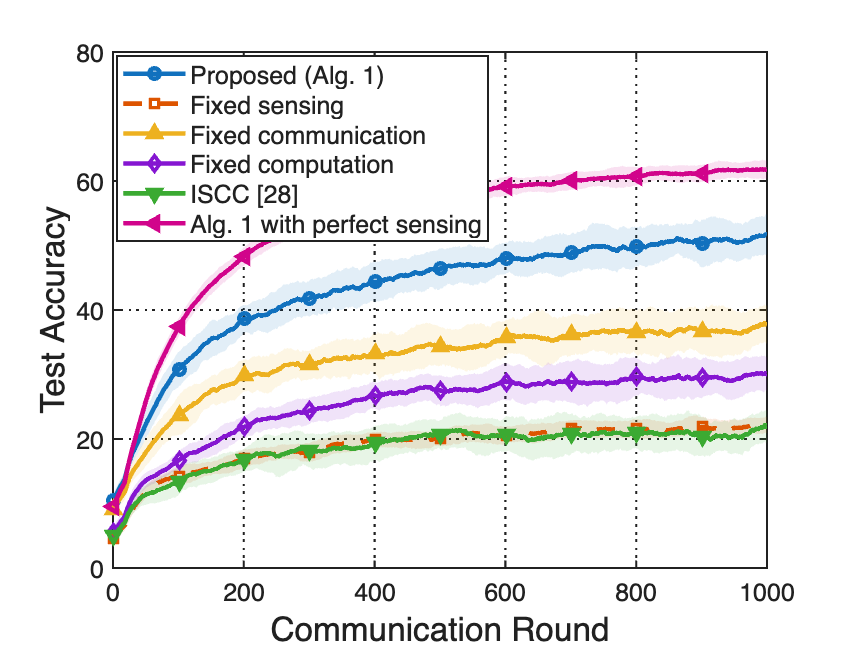}
        \caption{Test accuracy, non-i.i.d.}
        \label{fig:TestAcc-cifar100}
    \end{subfigure}
    \hfill
    \begin{subfigure}{0.27\textwidth}
        \centering
        \includegraphics[width=\columnwidth]{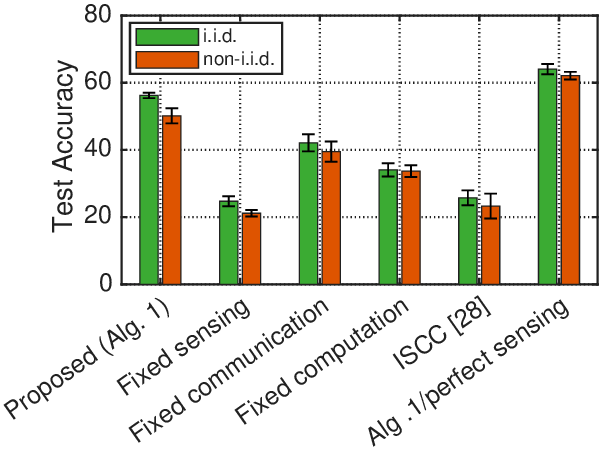}
        \caption{Final test accuracy, $T=1000$}
        \label{fig:TestAcc-final}
    \end{subfigure}

    \caption{Simulation results with the CIFAR-100 dataset. (a) Training loss for the proposed scheme and baselines with i.i.d. partitioning. (b) Test accuracy for the proposed scheme and baselines with non-i.i.d. partitioning. (c) Final test accuracy after $T=1000$ communication rounds. The shaded regions and error bars indicate one standard deviation over three seeds.}
    \label{fig:cifar100-results} 
      \vspace{-1em}
\end{figure*}


Fig.~\ref{fig:cifar100-results}(a) shows the training loss versus communication rounds for the proposed scheme and benchmark schemes on the CIFAR-100 dataset under i.i.d.\ data partitioning, with $T=1000$ rounds. The proposed joint design achieves the fastest loss reduction and the lowest steady-state training loss among all practical schemes, confirming its effectiveness on the more challenging CIFAR-100 task. The baselines exhibit higher loss because each optimizes only part of the sensing--computation--communication pipeline.

Fig.~\ref{fig:cifar100-results}(b) shows the test accuracy versus communication rounds on CIFAR-100 under non-i.i.d.\ data partitioning, where the proposed scheme consistently outperforms all practical baselines by jointly adapting sensing, computation, and communication resources. Fig.~\ref{fig:cifar100-results}(c) compares the final test accuracy after $T=1000$ rounds under both i.i.d.\ and non-i.i.d.\ data distributions. The proposed scheme achieves the highest final accuracy among all practical schemes in both cases, while the moderate gap between i.i.d.\ and non-i.i.d.\ results reflects the impact of data heterogeneity. For instance, the proposed scheme achieves approximately $12\%$ higher accuracy than the closest practical benchmark under non-i.i.d.\ partitioning.


\section{Conclusion}
\label{sec:conclusion}
In this paper, we investigate a unified wireless edge-learning framework that jointly optimizes sensing, computation, and communication in federated learning. Unlike formulations that assume data is already available at the devices, the framework models the sensing process explicitly and captures the coupled effects of sensing quality, local computation, and OTA aggregation under imperfect CSI on the learning trajectory. We derive a non-convex convergence bound that exposes these effects and use it to formulate a joint resource allocation problem balancing learning quality, energy, and latency, which we solve through a low-complexity alternating optimization method. The results highlight sensing as a fundamental and controllable component of wireless edge learning, with a significant impact on learning performance, and provide a principled basis for the joint design of sensing, computation, and communication under noisy multimodal sensing and imperfect CSI in future 6G edge-intelligence systems.

\appendices
 
\section{Proof of Lemma~\ref{lem:drift}}
\label{lemma:drift-proof}
 
By definition, $\bgamma^c_t = \sum_{n=1}^{N} \alpha_{n,t} \Delta\btheta^{\rm clean}_{n,t} + \eta\tau\nabla F_t(\btheta_t)$,
where $\alpha_{n,t} = D_{n,t}/D_t$ and $\Delta\btheta^{\rm clean}_{n,t}$ is the result of $\tau$ noise-free mini-batch SGD steps on $F_{n,t}$ starting at $\btheta_t$. Taking conditional expectation over the mini-batch sampling at device $n$ replaces the stochastic gradient by the empirical gradient:
\begin{align}
\mathbb{E}[\Delta\btheta^{\rm clean}_{n,t} \,|\, \btheta_t] = - \eta \sum_{i=1}^{\tau} \mathbb{E}\big[\nabla F_{n,t}(\btheta^{(i)}_{n,t}) \,\big|\, \btheta_t \big],
\end{align}
where $\btheta^{(i)}_{n,t}$ denotes the noise-free local iterate at step $i$ with $\btheta^{(1)}_{n,t} = \btheta_t$. Substituting into the definition of $\bgamma^c_t$ and rearranging,
\begin{align}
\mathbb{E}[\bgamma^c_t \,|\, \btheta_t] &= -\eta\sum_{n=1}^N \alpha_{n,t} \sum_{i=1}^{\tau} \mathbb{E}\big[\nabla F_{n,t}(\btheta^{(i)}_{n,t})\,\big|\,\btheta_t\big] + \eta\tau\nabla F_t(\btheta_t) \notag\\
&= \eta \sum_{i=1}^{\tau} \Big( \nabla F_t(\btheta_t) - \sum_{n=1}^N \alpha_{n,t}\, \mathbb{E}\big[\nabla F_{n,t}(\btheta^{(i)}_{n,t})\,\big|\,\btheta_t\big] \Big). \label{eq:drift-step1}
\end{align}
Adding and subtracting $\sum_n \alpha_{n,t}\nabla F_{n,t}(\btheta_t) = \nabla F_t(\btheta_t)$ inside each summand of~\eqref{eq:drift-step1},
\begin{align}
&\mathbb{E}[\bgamma^c_t \,|\, \btheta_t] \notag\\
&= \eta \sum_{i=1}^{\tau} \sum_{n=1}^N \alpha_{n,t} \Big( \nabla F_{n,t}(\btheta_t) - \mathbb{E}\big[\nabla F_{n,t}(\btheta^{(i)}_{n,t})\,\big|\,\btheta_t\big] \Big).
\end{align}
Taking norms, applying the triangle inequality, and using the $L$-smoothness of $F_{n,t}$ (A1),
\begin{align}
\big\|\mathbb{E}[\bgamma^c_t \,|\, \btheta_t]\big\| &\le \eta \sum_{i=1}^{\tau} \sum_{n=1}^N \alpha_{n,t} L\, \mathbb{E}\big[\big\|\btheta_t - \btheta^{(i)}_{n,t}\big\|\,\big|\,\btheta_t\big]. \label{eq:drift-step2}
\end{align}
The local-iterate drift from the noise-free recursion is bounded by induction. At step $j$, $\btheta^{(j+1)}_{n,t}=\btheta^{(j)}_{n,t}-\eta\nabla F_{n,t}(\btheta^{(j)}_{n,t};\bxi^{(j)}_{n,t})$, so
\begin{align}
\big\|\btheta_t-\btheta^{(j+1)}_{n,t}\big\| &\le \big\|\btheta_t-\btheta^{(j)}_{n,t}\big\|+\eta\big\|\nabla F_{n,t}(\btheta^{(j)}_{n,t};\bxi^{(j)}_{n,t})\big\| \notag\\
&\le \big\|\btheta_t-\btheta^{(j)}_{n,t}\big\|+\eta G,
\end{align}
where the last step uses the bounded gradient norm in A3. Unrolling the recursion from $\btheta^{(1)}_{n,t} = \btheta_t$ gives $\|\btheta_t - \btheta^{(i)}_{n,t}\| \le (i-1)\eta G \le \eta\tau G$ for $i \le \tau$. Substituting into~\eqref{eq:drift-step2} and squaring,
\begin{align}
\big\|\mathbb{E}[\bgamma^c_t \,|\, \btheta_t]\big\|^2 \le \big(\eta \cdot \tau \cdot L \cdot \eta\tau G\big)^2 = L^2\eta^4\tau^4 G^2,
\end{align}
which is~\eqref{eq:drift-bound}. \hfill$\blacksquare$

\section{Proof of Lemma~\ref{lem:variance}}
\label{lemma:variance-proof} 
Let $\bm{r}^{(i)}_{n,t} \triangleq \nabla F_{n,t}(\btheta^{(i)}_{n,t}; \bxi^{(i)}_{n,t}) - \mathbb{E}[\nabla F_{n,t}(\btheta^{(i)}_{n,t}; \bxi^{(i)}_{n,t})\,|\,\btheta^{(i)}_{n,t}]$ denote the mini-batch noise at local step $i$ on device $n$, conditional on the iterate $\btheta^{(i)}_{n,t}$. The sequence $\{\bm{r}^{(i)}_{n,t}\}_{i=1}^{\tau}$ forms a martingale-difference sequence conditioned on the local iterates. By A4 and Jensen's inequality, conditioning on $\btheta^{(i)}_{n,t}$,
\begin{align}
\mathbb{E}\big[\|\bm{r}^{(i)}_{n,t}\|^2\,\big|\,\btheta^{(i)}_{n,t}\big]\le \frac{\sigma_B^2}{B}+\frac{\sigma_{\rm pop}^2}{D_{n,t}}\triangleq V_{n,t}.
\end{align}
Since the local SGD mapping is $(1+\eta L)$-Lipschitz under A1, the conditional variance propagates as
\begin{align}
\mathbb{E}\big\|&\btheta^{(i+1)}_{n,t}-\mathbb{E}[\btheta^{(i+1)}_{n,t}\,|\,\btheta_t]\big\|^2 \notag \\ 
&\le (1+\eta L)^2\mathbb{E}\big\|\btheta^{(i)}_{n,t}-\mathbb{E}[\btheta^{(i)}_{n,t}\,|\,\btheta_t]\big\|^2+\eta^2V_{n,t}.
\end{align}
Since $\btheta^{(1)}_{n,t}=\btheta_t$, unrolling the recursion gives
\begin{align}
\mathbb{E}\big\|\Delta\btheta^{\rm clean}_{n,t}-\mathbb{E}[\Delta\btheta^{\rm clean}_{n,t}\,|\,\btheta_t]\big\|^2 & \le \eta^2V_{n,t}\sum_{i=0}^{\tau-1}(1+\eta L)^{2i} \notag \\
&\le 3\eta^2\tau V_{n,t},
\label{eq:variance-step2}
\end{align}
where the last inequality follows from $(1+\eta L)^{2\tau}\le e^{2\eta L\tau}\le e<3$ under $\eta\tau L\le 1/2$.

Aggregating across devices using the convex-combination structure and Jensen's inequality,
\begin{align}
\mathbb{E}\big\|\bgamma^c_t&-\mathbb{E}[\bgamma^c_t\,|\,\btheta_t]\big\|^2 \notag \\
& \le \mathbb{E}\Big\|\sum_n \alpha_{n,t}\big(\Delta\btheta^{\rm clean}_{n,t}-\mathbb{E}[\Delta\btheta^{\rm clean}_{n,t}\,|\,\btheta_t]\big)\Big\|^2 \notag\\
&\le \sum_n \alpha_{n,t}\mathbb{E}\big\|\Delta\btheta^{\rm clean}_{n,t}-\mathbb{E}[\Delta\btheta^{\rm clean}_{n,t}\,|\,\btheta_t]\big\|^2 \notag\\
&\le 3\eta^2\tau\sum_n \alpha_{n,t}V_{n,t}.
\end{align}
Substituting $V_{n,t}$ yields~\eqref{eq:variance-bound}. \hfill$\blacksquare$

\section{Proof of Theorem~\ref{thm:nonconvex}}
\label{app:proof-nonconvex}
 
By the $L$-smoothness of $F_t$ (A1),
\begin{align}
    F_t(\btheta_{t+1})
    &\le
    F_t(\btheta_t)
    + \nabla F_t(\btheta_t)^\top (\btheta_{t+1} - \btheta_t)
    \notag\\
    &\quad
    + \frac{L}{2}\|\btheta_{t+1} - \btheta_t\|^2.
    \label{eq:proof-smoothness}
\end{align}
Substituting the perturbed update $\btheta_{t+1} - \btheta_t = -\eta\tau\nabla F_t(\btheta_t) + \bzeta_t$ from~\eqref{eq:bs-update-perturbed},
\begin{align}
    F_t(\btheta_{t+1})
    &\le
    F_t(\btheta_t)
    - \eta\tau\|\nabla F_t(\btheta_t)\|^2
    + \nabla F_t(\btheta_t)^\top \bzeta_t
    \notag\\
    &\quad
    + \frac{L}{2}\big\|-\eta\tau\nabla F_t(\btheta_t) + \bzeta_t\big\|^2.
    \label{eq:proof-substitute}
\end{align}
Expanding the squared norm,
\begin{align}
    \big\|-\eta\tau\nabla F_t(\btheta_t) + \bzeta_t\big\|^2
    &=
    \eta^2\tau^2\|\nabla F_t(\btheta_t)\|^2
    \notag\\
    &\quad
    - 2\eta\tau\nabla F_t(\btheta_t)^\top\bzeta_t
    + \|\bzeta_t\|^2,
\end{align}
and collecting terms in~\eqref{eq:proof-substitute},
\begin{align}
    F_t(\btheta_{t+1})
    &\le
    F_t(\btheta_t)
    - \eta\tau\Big(1 - \frac{L\eta\tau}{2}\Big)\|\nabla F_t(\btheta_t)\|^2
    \notag\\
    &\quad
    + (1 - L\eta\tau)\nabla F_t(\btheta_t)^\top\bzeta_t
    + \frac{L}{2}\|\bzeta_t\|^2.
    \label{eq:proof-collected}
\end{align}
We take the conditional expectation given $\btheta_t$. The perturbation splits as $\bzeta_t = \bar\bzeta_t + \bzeta_t^0$ with $\mathbb{E}[\bzeta_t^0 \,|\, \btheta_t] = \bzero$, where $\bar\bzeta_t = \mathbb{E}[\bzeta_t \,|\, \btheta_t]$ collects the client-drift and OTA-misalignment biases. Hence
\begin{align}
    \mathbb{E}\big[\nabla F_t(\btheta_t)^\top\bzeta_t \,\big|\, \btheta_t\big]
    = \nabla F_t(\btheta_t)^\top\bar\bzeta_t.
\end{align}
For $\eta\tau L \le 1/2$ we have $0 \le 1 - L\eta\tau \le 1$. Applying Cauchy--Schwarz and Young's inequality with weight $\eta\tau/2$,
\begin{align}
    (1 - L\eta\tau)\nabla F_t(\btheta_t)^\top\bar\bzeta_t
    \le \frac{\eta\tau}{4}\|\nabla F_t(\btheta_t)\|^2 + \frac{1}{\eta\tau}\|\bar\bzeta_t\|^2.
    \label{eq:proof-young}
\end{align}
Since $\eta\tau(1 - L\eta\tau/2) \ge \eta\tau/2$ for $\eta\tau L \le 1/2$, the net coefficient of $\|\nabla F_t(\btheta_t)\|^2$ after subtracting the $\eta\tau/4$ term in~\eqref{eq:proof-young} is at most $-\eta\tau/4$. Taking total expectation,
\begin{align}
    \mathbb{E}[F_t(\btheta_{t+1})]
    &\le
    \mathbb{E}[F_t(\btheta_t)]
    - \frac{\eta\tau}{4}\mathbb{E}\|\nabla F_t(\btheta_t)\|^2
    \notag\\
    &\quad
    + \frac{1}{\eta\tau}\mathbb{E}\|\bar\bzeta_t\|^2
    + \frac{L}{2}\mathbb{E}\|\bzeta_t\|^2.
    \label{eq:proof-cond-expectation}
\end{align}
The bias norm satisfies $\|\bar\bzeta_t\|^2 \le \mathbb{E}\|\bzeta_t\|^2 \le \Psi_t$ by Jensen's inequality and Lemma~\ref{lem:total-perturbation}, and $\mathbb{E}\|\bzeta_t\|^2 \le \Psi_t$ directly. Substituting both,
\begin{align}
    \mathbb{E}[F_t(\btheta_{t+1})]
    \le
    &\,\,\mathbb{E}[F_t(\btheta_t)]
    - \frac{\eta\tau}{4}\mathbb{E}\|\nabla F_t(\btheta_t)\|^2 \notag \\
    &
    + \Big(\frac{1}{\eta\tau} + \frac{L}{2}\Big)\Psi_t.
    \label{eq:proof-descent}
\end{align}
To telescope across rounds with the time-varying objective, we bridge $F_t$ and $F_{t+1}$. By the definition of the drift,
\begin{align}
    \mathbb{E}[F_{t+1}(\btheta_{t+1})] = \mathbb{E}[F_t(\btheta_{t+1})] + \delta_t.
\end{align}
Substituting into~\eqref{eq:proof-descent},
\begin{align}
    \mathbb{E}[F_{t+1}(\btheta_{t+1})]
    &\le
    \mathbb{E}[F_t(\btheta_t)]
    - \frac{\eta\tau}{4}\mathbb{E}\|\nabla F_t(\btheta_t)\|^2
    \notag\\
    &\quad
    + \Big(\frac{1}{\eta\tau} + \frac{L}{2}\Big)\Psi_t + \delta_t.
    \label{eq:proof-bridged}
\end{align}
Rearranging and summing over $t = 0, \ldots, T-1$, the $\mathbb{E}[F_t(\btheta_t)]$ terms telescope:
\begin{align}
    \frac{\eta\tau}{4}\sum_{t=0}^{T-1}\mathbb{E}\|\nabla F_t(\btheta_t)\|^2
    &\le
    \mathbb{E}[F_0(\btheta_0)] - \mathbb{E}[F_T(\btheta_T)]
    \notag\\
    &
    + \Big(\frac{1}{\eta\tau} + \frac{L}{2}\Big)\sum_{t=0}^{T-1}\Psi_t
    + \sum_{t=0}^{T-1}\delta_t.
\end{align}
Using $\mathbb{E}[F_T(\btheta_T)] \ge F_{\inf}$ from A2 and dividing by $\eta\tau T/4$,
\begin{align}
    \frac{1}{T}\sum_{t=0}^{T-1}\,&\mathbb{E}\|\nabla F_t(\btheta_t)\|^2
    \le
    \frac{4\big(\mathbb{E}[F_0(\btheta_0)] - F_{\inf}\big)}{\eta\tau T}
    \notag\\
    &\quad
    + \Big(\frac{4}{\eta^2\tau^2} + \frac{2L}{\eta\tau}\Big)\frac{1}{T}\sum_{t=0}^{T-1}\Psi_t
    + \frac{4}{\eta\tau T}\sum_{t=0}^{T-1}\delta_t.
\end{align}
For $\eta\tau L \le 1/2$, this is the bound in~\eqref{eq:nonconvex-main-bound}, with the penalty coefficient $\frac{4}{\eta^2\tau^2} + \frac{2L}{\eta\tau} = \mathcal{O}(1/(\eta\tau)^2)$. This completes the proof.

\bibliographystyle{IEEEtran}
\bibliography{IEEEabrv,ref}

\balance

\end{document}